\documentclass[12pt, draftclsnofoot, onecolumn]{IEEEtran}
\usepackage{hyperref}
\usepackage{amsmath,amssymb}
\usepackage{latexsym}
\usepackage{CJK}
\usepackage{cite}   

\usepackage{verbatim}  
\usepackage{graphicx}  
\usepackage{mathrsfs}	
\usepackage{algorithm} 
\usepackage{algpseudocode}
\usepackage{booktabs}
\usepackage{textcomp}  
\usepackage{multirow}  
\usepackage{lettrine}   
\usepackage{graphicx}  
\bibliographystyle{IEEEtran}
\usepackage{amsthm} 
\usepackage{color}  
\newtheorem{thm}{Theorem}
\newtheorem{cor}{Corollary}
\newtheorem{lem}{Lemma}




\hyphenation{op-tical net-works semi-conduc-tor}

\renewcommand{\algorithmicrequire}{\textbf{Input:}}
\renewcommand\algorithmicensure {\textbf{Output:} }
\renewcommand\algorithmicrepeat {\textbf{Repeat:} }
\renewcommand\algorithmicuntil {\textbf{Until:} }

\begin{document}

\title{Subcarrier Grouping with Environmental Sensing for MIMO-OFDM Systems over Correlated Double-selective Fading Channels}

\author{\IEEEauthorblockN{Jiaxun Lu, Zhengchuan Chen, Pingyi Fan, Khaled B. Letaief}

\IEEEauthorblockA{
Tsinghua National Laboratory for Information Science and Technology(TNList),\\
 Department of Electronic Engineering, Tsinghua University,Beijing,China.\\
 Department of Electronic and Computer Engineering, Hong Kong University of Science and Technology, Hong Kong\\
  Email: \{lujx14,chenzc10\}@mails.tsinghua.edu.cn, fpy@mail.tsinghua.edu.cn, eekhaled@ust.hk}
}

\maketitle

\graphicspath{{Figures/}}

\begin{abstract}

MIMO-OFDM is a promising technique in 5G wireless communications. In high mobility scenarios, the transmission environments are time varying and/or the relative moving velocity between the transmitter and receiver is also time varying. In the literature, most of previous works mainly focused on fixed subcarrier group size and precoded the MIMO signals with unitary channel state information (CSI). In this way, the subcarrier grouping may naturally lead to big loss of channel capacity in high mobility scenarios due to the CSI difference on the subcarriers in each group. To employ the MIMO-OFDM technique, adaptive subcarrier grouping scheme may be an efficient way. In this paper, we first consider MIMO-OFDM systems over double-selective i.i.d. Rayleigh channels and investigate the quantitative relation between subcarrier group size and capacity loss theoretically. With developed theoretical results, we also propose an adaptive subcarrier grouping scheme to satisfy the preset capacity loss threshold by adjusting grouping size with the sensed environmental information and mobile velocity. Theoretical analysis and simulation results show that to achieve a better system capacity, a sparse scattering, lower SNR and lower velocity as well as properly large antenna number are matched with larger subcarrier group size. One important observation is that if the antenna number is too large and higher than a threshold, which will not bring any additional gain to the subcarrier grouping. That is, the system capacity loss will converge to a lower-bound expeditiously with respect to antenna number, which is given in theory also.

\end{abstract}

\begin{IEEEkeywords}
MIMO-OFDM, double-selective fading channel, channel correlation, adaptive subcarrier grouping.
\end{IEEEkeywords}

\IEEEpeerreviewmaketitle

\section{Introduction}\label{Sec:Introduction}

\lettrine[lines=2]{H}{igh} mobility is one of the  key scenarios to 5G wireless communications, in which, the transmission environments are time varying and/or the relative moving velocity between the  transmitter and receiver  is also time varying. Thus, how to keep high spectrum efficiency and  reliable transmission performance in such a scenarios is becoming a full of challenging topic. As we know, the OFDM modulation scheme can provide high spectrum efficiency and enhance system performance in frequency selective fading channels \cite{stuber2004broadband}, { {while the multi-antenna structure improves system capacity, especially for the massive MIMO regimes \cite{larsson2014massive}. Also, MIMO regimes enhances system reliability as well as improves coverage \cite{sellathurai2010mimo,paulraj2003introduction}.}} Therefore, MIMO-OFDM is a promising technique in 5G systems, due to its provision capability of high capacity and spectrum efficiency in rich scattering wireless channels { \cite{you2015adjustable}}. But most of the joint design works on MIMO-OFDM are assumed that its working scenarios are relatively qusi-static. Thus, the transmission scheme of MIMO-OFDM in mobility scenarios needs to be carefully designed. That is, it should be capable of environmental sensing and adaptively adjusting to varying transmit environment and moving velocity.


In MIMO-OFDM regimes, various methods like frequency and spatial water-filling are performed to maximize the capacity of wireless channels\cite{tse2005fundamentals}. { {Also, \cite{joung2015multicast} proposes a linear precoding scheme to enhance the minimum user rate on multicast MIMO-OFDM systems.}} In addition, for convenience of symbol modulation and signal decoding at the transmit and receive side, linear or nonlinear signal precoding schemes on each individual subcarrier are carried out \cite{vu2007mimo,jiang2005joint,jiang2005uniform}. For instance in \cite{jiang2005uniform}, the signal precoding scheme decomposes MIMO channels into several uniform sub-channels, which simplifies subsequent modulation and improve diversity gains of system significantly. However, either water-filling or signal precoding demands perfect channel state information at the transmitter (CSIT). When CSIT is not readily available, it is often possible to use a feedback channel to provide channel state information (CSI) to the transmitter or just estimate CSI at the transmit side via reciprocal principle when time division duplex is used. The procedure in achieving CSIT and obtaining signal precoding matrices may occupy much too system resources, either frequency bandwidth or computational complexity. Especially in high subcarrier or antenna number regimes \cite{edfors2014massive,pande2007reduced}.

An effective scheme to reduce system cost is to precode with subcarrier grouping, which is not a new idea { {\cite{pande2007reduced,choi2006interpolation,zhang2007reduced,lin2015feedback}.}} In subcarrier grouping regimes, transmission system will group subcarriers via a specific scheme and apply an unitary precoding matrix to the subcarriers in one group. Thus, one can linearly reduce the computational complexity of precoding process as well as decrease the system feedback CSI amount. However, the subcarrier grouping scheme naturally leads to system capacity loss caused by the difference of CSIT at subcarriers in one subcarrier group, the larger the subcarrier group is, the greater the capacity loss suffers. This means that there exists a tradeoff between the subcarrier group size and system channel capacity. However, previous works { {in \cite{pande2007reduced,choi2006interpolation,zhang2007reduced,lin2015feedback}}} focused on the process of feedback information, but neglected the adaptive adjustment of subcarrier group size. Hence, previous precoding schemes with fixed grouping size may not keep an effective adaption of system services performance to varying double-selective channels. { {\cite{ge2014energy} proposes an adaptive subchannel grouping scheme to simplify the multichannel optimization problem, while satisfying system services performance constraint. Similarly, \cite{lu2014precoding} discussed the adaptive subcarrier grouping method only in frequency-selective channels.}} Now we will extend it to double-selective channels in this work.

In this paper, we firstly investigate how the difference of CSIT affects channel capacity loss. Then, by exploiting the relations between channel correlation and the second order of CSIT difference, we derive quantitative relations between channel capacity loss and subcarrier group size. The quantitative relations reveal the explicit tradeoff between system service capacity and system cost. In addition, based on the new obtained theoretical results, we also propose an adaptive subcarrier grouping based MIMO-OFDM signal precoding algorithm to satisfy the preset capacity loss threshold, which adjusts the subcarrier group size adaptively with the sensed transmit environment information and mobile velocity.

 The rest of this paper is organized as follows. Section \uppercase\expandafter{\romannumeral2} presents the system model and its scenario of the subcarrier grouping based precoding scheme. In this section, we explain what the CSI difference is and its relations with CSIT difference. Section \uppercase\expandafter{\romannumeral3} derives the channel capacity with CSI difference in the view of information theory. Meanwhile, we also discuss the approximation of corresponding ergodic capacity in this section. In Section \uppercase\expandafter{\romannumeral4}, the relations between CSI difference and channel correlation are investigated. Then, the subcarrier grouping method, given the capacity loss constraint, is established with a mapping method. Based on the obtained theoretical results, we propose an adaptive subcarrier grouping based precoding algorithm in Section \uppercase\expandafter{\romannumeral4} and give the theoretical performance analysis in various scenarios. Section \uppercase\expandafter{\romannumeral5} examines the effectiveness and application conditions of proposed subcarrier grouping scheme. Conclusions of this work are given in Section \uppercase\expandafter{\romannumeral6}.

\emph{Notation}: $(\cdot)^{-1}$ and $(\cdot)^{\dagger}$ denote the inverse and conjugate transpose of $(\cdot)$, respectively. The symbols $\mathbf{0}$ and $\mathbf{I}$ denote the zero and the identical matrix, respectively.

\section{System Model And Problem Formulation}\label{Sec:Sysmodel}

In this section, we shall first review the system model of MIMO-OFDM system. Then, we explain how a subcarrier grouping based MIMO-OFDM signal precoding algorithm works and introduce how the CSI difference caused by subcarrier grouping schemes affects the MIMO-OFDM communication systems.

\subsection{Review of System Structure}

\begin{figure*}[htbp]
\centering
\includegraphics[width=0.8\textwidth]{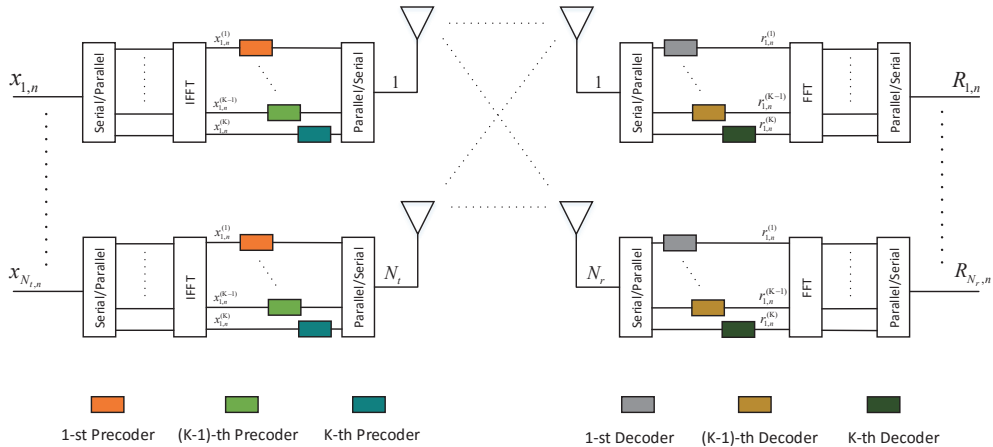}
\caption{Subcarrier grouping based MIMO-OFDM signal precoding system.} \label{Fig:SystemDiagram}
\end{figure*}

Let us consider a MIMO-OFDM communication system as shown in Fig. \ref{Fig:SystemDiagram}. It involves with $N_t$ transmit antennas and $N_r$ receive antennas. $K$ subcarriers are taken into account in each OFDM blocks. For the concise of analysis, cyclic prefixes are ignored and let the $(n,k)$-th subcarrier denote the $k$-th subcarrier in the $n$-th OFDM block. Denote the transmit symbol vector at the $(n,k)$-th subcarrier as $\mathbf{X}_{n}^{(k)} = [x_{1,n}^{(k)},x_{2,n}^{(k)},\cdots,x_{N_t,n}^{(k)}]^{T}$. The corresponding $N_r \times N_t$ channel fading matrix is $\mathbf{H}_{n}^{(k)}$ and the $N_t \times N_t$ precoding matrix is $\mathbf{F}_{n}^{(k)}$. Thus, the precoded transmit symbol vector can be written as

\begin{equation}\label{Equ:PrecodedSignal}
\mathbf{\widetilde{X}}_{n}^{(k)} = \mathbf{F}_{n}^{(k)} \mathbf{X}_{n}^{(k)}.
\end{equation}

At the receiver, the corresponding received symbol vector $\mathbf{R}_{n}^{(k)}$ can be obtained by

\begin{equation}\label{Equ:Rk=HS+N}
\begin{split}
\mathbf{R}_{n}^{(k)} = \mathbf{H}_{n}^{(k)} \mathbf{\widetilde{X}}_{n}^{(k)} + \mathbf{W}_{n}^{(k)}
= \mathbf{H}_{n}^{(k)} \mathbf{F}_{n}^{(k)} \mathbf{X}_{n}^{(k)} + \mathbf{W}_{n}^{(k)},
\end{split}
\end{equation}
where $\mathbf{R}_{n}^{(k)} = [r_{1,n}^{(k)},r_{2,n}^{(k)},\cdots,r_{N_t,n}^{(k)}]^{T} $ and $\mathbf{W}_{n}^{(k)}$ is the normalized circular complex additive white Gaussian noise (AWGN). In FDD regimes, the receiver needs to feed back the CSI, i.e. $\mathbf{H}_{n}^{(k)}$, to transmitter, while in TDD regimes it can obtain the CSI via the reciprocal principle. The obtaining of CSI may acquire extra feedback bandwidth or computational complexity cost, which may decrease system service performance. Especially for the fast time varying channels, the computational complexity and  the overhead cost for channel estimation can be unsustainable in practice.

\subsection{Subcarrier Grouping And Channel Model with CSI Difference}

In this part, we shall introduce the subcarrier grouping method and the corresponding channel model with CSI difference. Let $(n_c,k_c)$ be the center subcarrier. The corresponding subcarrier group is denoted as $\pounds(n_c,k_c)$, which is composed by the adjacent subcarriers of $(n_c,k_c)$ via specific method. For all subcarriers with subscript $(n,k) \in \pounds(n_c,k_c)$, one can apply precoding matrix $\mathbf{F}_{n_c}^{(k_c)}$ to precode transmit vectors at the $(n,k)$-th subcarrier. Thus, the subcarrier grouping method can be modeled by

\begin{equation}\label{Equ:subcarrierGrouping}
\mathbf{R}_{n}^{(k)} = \mathbf{H}_{n}^{(k)} \mathbf{F}_{n_c}^{(k_c)} \mathbf{X}_{n}^{(k)} + \mathbf{W}_{n}^{(k)}.
\end{equation}

Therefore, only the CSI and precoding matrix at the $(n_c,k_c)$-th subcarrier needs to be acquired and generated.  As seen in \eqref{Equ:subcarrierGrouping}, it shows that subcarrier grouping scheme naturally leads to the CSI difference between the $(n,k)$- and $(n_c,k_c)$-th subcarriers, defined as

\begin{equation}\label{Equ:ChannelMismatch}
\mathbf{M}_{n,n_c}^{(k,k_c)} = \mathbf{H}_{n}^{(k)} - \mathbf{H}_{n_c}^{(k_c)}.
\end{equation}

Obviously, the CSI difference introduces interference to systems and reduce channel capacity. This is the cost by using subcarrier grouping technique, which limits the application of subcarrier grouping in certain degree. That is, there exists a tradeoff between subcarrier group size and channel capacity loss. We shall discuss it in the sequel.

 Substituting \eqref{Equ:ChannelMismatch} into \eqref{Equ:subcarrierGrouping}, the subcarrier grouping model can be rewritten as

\begin{equation}\label{Equ:CapacityModel}
\begin{split}
\mathbf{R}_{n}^{(k)} =& (\mathbf{H}_{n_c}^{(k_c)} + \mathbf{M}_{n,n_c}^{(k,k_c)})\mathbf{F}_{n_c}^{(k_c)}\mathbf{X}_{n}^{(k)} + \mathbf{W}_{n}^{(k)}\\
=& \mathbf{H}_{n_c}^{(k_c)}\mathbf{F}_{n_c}^{(k_c)}\mathbf{X}_{n}^{(k)} + \underbrace{ \mathbf{M}_{n,n_c}^{(k,k_c)} \mathbf{F}_{n_c}^{(k_c)}\mathbf{X}_{n}^{(k)} + \mathbf{W}_{n}^{(k)} }_{\mathbf{N}_{n,n_c}^{(k,k_c)}},
\end{split}
\end{equation}
where $\mathbf{M}_{n,n_c}^{(k,k_c)}\mathbf{F}_{n_c}^{(k_c)}\mathbf{X}_{n}^{(k)}$ is the additional interference introduced by CSI difference. Denote the total interference including the additional interference and AWGN as the equivalent interference $\mathbf{N}_{n,n_c}^{(k,k_c)}$. It can be proved that it's related to channel self-correlation, based on which, the quantitive relation between capacity loss and subcarrier group size will be  shown in Section III.

Later, we present an algorithm to satisfy the given capacity loss threshold by  adaptively adjusting the size of subcarrier groups with the concurrent transmit environment and mobile velocity. That is, we shall investigate the quantitative relationship between subcarrier group size and capacity loss and design a feasible subcarrier grouping algorithm to adaptively track the sensed channel time varying.

\section{Capacity Analysis}\label{Sec:Capacity}

In this section, we derive the capacity loss of each individual subcarrier in terms of CSI difference. We first present a result on capacity loss under the scenario with CSI difference and then analyze the approximate ergodic capacity when the subcarrier grouping based  precoding algorithm is employed.

\subsection{Capacity Lower Bound With CSI difference}

Considering the fact, over a wide range of SNR, the gain of water-filling is very small, especially in high SNR. In this case, allocating equal power at all sub-channels is almost optimal \cite{tse2005fundamentals}. Thus, we assume the equal power allocation scheme over both the frequency and spatial domains are adopted in our considered systems. Hence, the design of the precoding matrix $\mathbf{F}_{n_c}^{(k_c)}$ can degrade into the search of unitary matrix\cite{jiang2005uniform}. It is well known that the unitary precoding matrix doesn't change the distribution of random variables. Based on this result, we can rewrite \eqref{Equ:CapacityModel} in an equivalent form as

\begin{equation}\label{Equ:CapacityModSimp}
\begin{split}
\mathbf{R}_{n}^{(k)} =& \left(\mathbf{H}_{n_c}^{(k_c)} + \mathbf{M}_{n,n_c}^{(k,k_c)}\right)\mathbf{X}_{n}^{(k)} + \mathbf{W}_{n}^{(k)}\\
=& \mathbf{H}_{n_c}^{(k_c)}\mathbf{X}_{n}^{(k)} + \underbrace{ \mathbf{M}_{n,n_c}^{(k,k_c)} \mathbf{X}_{n}^{(k)} + \mathbf{W}_{n}^{(k)} }_{\mathbf{N}_{n,n_c}^{(k,k_c)}},
\end{split}
\end{equation}
where the precoding matrix $\mathbf{F}_{n_c}^{(k_c)}$ has been absorbed into the random vectors $\mathbf{X}_{n}^{(k)}$.

In the channel fading matrix $\mathbf{H}_{n}^{(k)}$, the $i$-th row represents the fading factor from the $i$-th receive antenna to $N_t$ transmitters at the $(n,k)$-th subcarrier, which will simply be denoted by $\mathbf{g}_{i,n}^{(k)}$ for subsequent use. Similarly, we denote the $j$-th column of $\mathbf{H}_{n}^{(k)}$ as $\mathbf{h}_{j,n}^{(k)}$, which represents the fading factor from the $i$-th transmit antenna to $N_r$ receivers at the $(n,k)$-th subcarrier. Denote

\begin{equation}\label{Equ:definationOfmn}
u = \max\{N_t , N_r\}, ~~~ v = \min\{N_t , N_r\}.
\end{equation}

Given the channel fading matrix $\mathbf{H}_{n}^{(k)}$, the capacity of the $(n,k)$-th subcarrier without CSI difference (i.e. $\mathbf{M}_{n,n_c}^{(k,k_c)} = \mathbf{0}$) is

\begin{equation}\label{Equ:CapacityWithoutMismatch}
C_{n}^{(k)} = {\log}_2 ~\det\left(\mathbf{I} + \gamma \mathbf{S}_{n}^{(k)}\right),
\end{equation}
where the $v\times v$ matrix $\mathbf{S}_{n}^{(k)}$ is defined by
\begin{equation}\label{Equ:definationS}
\mathbf{S}_{n}^{(k)} =
\begin{cases}
\mathbf{H}_{n}^{(k)}\mathbf{H}_{n}^{(k)\dagger} = \begin{matrix} \sum_{i=1}^u {\mathbf{h}_{i,n}^{(k)}}{\mathbf{h}_{i,n}^{(k)}}^\dagger \end{matrix}, & \mbox{if } N_t \geq N_r \\
\mathbf{H}_{n}^{(k)\dagger} \mathbf{H}_{n}^{(k)} = \begin{matrix} \sum_{i=1}^u {\mathbf{g}_{i,n}^{(k)}}^\dagger{\mathbf{g}_{i,n}^{(k)}} \end{matrix}, & \mbox{if } N_t < N_r.
\end{cases}
\end{equation}

For simplifying the calculation, we assume that $\mathbf{S}_{n}^{(k)}$ is normalized so that $\rm{trace}\{E[\mathbf{S}_{n}^{(k)}]\}/v = 1$, where $\mathrm{E}(\cdot)$ represents the expectation notation. Accordingly, the SNR at each subcarrier on one antenna is $\gamma = \rho_t / N_t$, where $\rho_t$ is the total transmit power. As to the scenarios with CSI difference (i.e. $\mathbf{M}_{n,n_c}^{(k,k_c)} \neq 0$), the capacity loss varies with $\mathbf{M}_{n,n_c}^{(k,k_c)}$. Thus, one cannot adjust transmit rate without exact CSI, which cannot be acquired in this subcarrier grouping scheme. For sake of transmission reliability, as an alternative, we use the following lower-bound instead.

\begin{equation}\label{Equ:BoundsOfMismatch}
C_{n}^{(k)} \geq  {\log}_2~\det\left(\mathbf{I} + \gamma _e \mathbf{S}_{n_c}^{(k_c)}\right),
\end{equation}
where $\gamma _e$ is  the equivalent-SNR (ESNR) in the worst case, which is given by

\begin{equation}\label{Equ:SNROfCapacityBounds}
\begin{split}
\gamma _e = & \sigma_{\widetilde {\mathbf{X}}_n^{(k)}}^2 {\sigma_{\mathbf{N}_{n,n_c}^{(k,k_c)}}^2}^{-1}= \sigma_{\mathbf{X}_n^{(k)}}^2 {\sigma_{\mathbf{N}_{n,n_c}^{(k,k_c)}}^2}^{-1} \\
=& {\sigma _{x_{n}^{(k)}} ^2 }/({v \sigma_{m_{n,n_c}^{(k,k_c)}} ^2 \sigma _{x_{n}^{(k)}} ^2 + \sigma_{w_{n}^{(k)}}^2}).
\end{split}
\end{equation}

The variables $\sigma _{x_{n}^{(k)}} ^2$, $\sigma_{m_{n,n_c}^{(k,k_c)}} ^2$ and $\sigma_{w_{n}^{(k)}}^2$ are the second-moment of the elements of $\mathbf{X}_{n}^{(k)}$, $\mathbf{M}_{n,n_c}^{(k,k_c)}$ and $\mathbf{W}_{n}^{(k)}$, respectively. Appendix \ref{Appen:capacityThm} gives a brief derivation of this lower-bound.

It's clearly seen from \eqref{Equ:BoundsOfMismatch} that $\mathbf{M}_{n,n_c}^{(k,k_c)}$  is totally contributed to interference in the worst case. Combining \eqref{Equ:CapacityWithoutMismatch} with \eqref{Equ:BoundsOfMismatch}, the upper-bound of capacity loss caused by CSI difference can be described via the following lemma.

\begin{lem}\label{Lem:CapacityLossUpperbound}
 Suppose  the center subcarrier is the $(n_c,k_c)$-th subcarrier. The upper-bound of the channel capacity loss with CSI difference at the $(n,k)$-th subcarrier is given by
\begin{equation*}\label{Equ:CapacityLoss}
\mathscr{L}_{n,n_c}^{(k,k_c)} \leq {\log}_2 \det\left(\mathbf{I} + \gamma \mathbf{S}_{n}^{(k)}\right) - {\log}_2\det\left(\mathbf{I} + \gamma _e \mathbf{S}_{n}^{(k)}\right).
\end{equation*}
\end{lem}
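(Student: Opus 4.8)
The plan is to obtain the bound by subtracting two rate expressions that are already at hand: the genie-aided per-subcarrier capacity $C_n^{(k)}=\log_2\det(\mathbf{I}+\gamma\mathbf{S}_n^{(k)})$ of \eqref{Equ:CapacityWithoutMismatch}, which is what is achievable when $\mathbf{M}_{n,n_c}^{(k,k_c)}=\mathbf{0}$, and a lower bound on the rate that remains guaranteed once the center-subcarrier precoder $\mathbf{F}_{n_c}^{(k_c)}$ is reused at the $(n,k)$-th subcarrier. Since $\mathscr{L}_{n,n_c}^{(k,k_c)}$ is the gap between these two, an upper bound on the loss is exactly $C_n^{(k)}$ minus a lower bound on the guaranteed rate, so the whole argument reduces to producing the right guaranteed-rate bound.

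First I would start from the rewritten model \eqref{Equ:CapacityModSimp}, in which the CSI-difference term $\mathbf{M}_{n,n_c}^{(k,k_c)}\mathbf{X}_n^{(k)}$ is merged with the AWGN into the equivalent interference $\mathbf{N}_{n,n_c}^{(k,k_c)}$, and replace that interference by a Gaussian vector of the same second moment; because Gaussian noise is the worst case for a fixed covariance, this can only decrease the mutual information. This is the computation done in Appendix~\ref{Appen:capacityThm}, and it produces a worst-case rate of the form $\log_2\det(\mathbf{I}+\gamma_e\mathbf{S})$ with the equivalent SNR $\gamma_e$ of \eqref{Equ:SNROfCapacityBounds}. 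Crediting the receiver with the actual channel $\mathbf{H}_n^{(k)}$, so that only the interference \emph{power} $v\sigma_{m_{n,n_c}^{(k,k_c)}}^2\sigma_{x_n^{(k)}}^2$ is charged against the SNR, the guaranteed rate is $\log_2\det(\mathbf{I}+\gamma_e\mathbf{S}_n^{(k)})$ (relaxing instead to the known component $\mathbf{H}_{n_c}^{(k_c)}$ recovers the looser form \eqref{Equ:BoundsOfMismatch}); subtracting it from $C_n^{(k)}=\log_2\det(\mathbf{I}+\gamma\mathbf{S}_n^{(k)})$ gives the claimed right-hand side. I would then note that by \eqref{Equ:SNROfCapacityBounds}, together with the normalization $\mathrm{trace}\{\mathrm{E}[\mathbf{S}_n^{(k)}]\}/v=1$ and unit noise variance, $\gamma_e=\gamma/(1+v\sigma_{m_{n,n_c}^{(k,k_c)}}^2\gamma)\le\gamma$, so that, by monotonicity of the determinant on the positive-semidefinite cone, the bound is non-negative and increasing in the mismatch variance $\sigma_{m_{n,n_c}^{(k,k_c)}}^2$ --- exactly the quantity tied to channel correlation and group size in the sections to come.

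The main obstacle is the passage from the reference matrix $\mathbf{S}_{n_c}^{(k_c)}$ of \eqref{Equ:BoundsOfMismatch} to $\mathbf{S}_n^{(k)}$ as written in the lemma: realization by realization one does not have $\log_2\det(\mathbf{I}+\gamma_e\mathbf{S}_{n_c}^{(k_c)})\ge\log_2\det(\mathbf{I}+\gamma_e\mathbf{S}_n^{(k)})$. The clean way around this is to invoke the standing i.i.d. Rayleigh model for the double-selective channel: $\mathbf{H}_n^{(k)}$ and $\mathbf{H}_{n_c}^{(k_c)}$ are identically distributed, hence so are $\mathbf{S}_n^{(k)}$ and $\mathbf{S}_{n_c}^{(k_c)}$ and the two $\log_2\det$ terms, so the (ergodic) loss is unchanged if the reference channel is written as $\mathbf{S}_n^{(k)}$ throughout --- which is the form in which this bound is actually used when the ergodic capacity is analyzed later in this section. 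Accordingly I would either read the lemma as a bound on the averaged loss, or, equivalently, carry out the worst-case-rate derivation directly in the $\mathbf{S}_n^{(k)}$ form above so that the stated inequality holds verbatim.
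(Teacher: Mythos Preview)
Your proposal is correct and follows essentially the same route as the paper: the paper states Lemma~\ref{Lem:CapacityLossUpperbound} as a direct consequence of subtracting the worst-case lower bound \eqref{Equ:BoundsOfMismatch} (derived in Appendix~\ref{Appen:capacityThm} via M{\'e}dard's Gaussian-worst-case-noise argument) from the mismatch-free capacity \eqref{Equ:CapacityWithoutMismatch}. You have in fact been more careful than the paper about the $\mathbf{S}_{n_c}^{(k_c)}$ versus $\mathbf{S}_{n}^{(k)}$ discrepancy between \eqref{Equ:BoundsOfMismatch} and the lemma; the paper's Appendix~\ref{Appen:capacityThm} actually conditions on $\mathbf{H}_{n}^{(k)}$ and produces the bound with $\mathbf{S}_{n}^{(k)}$, which is precisely your option~(b).
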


It's worthy to note that the lower-bound of the channel capacity with CSI difference is  very important , since it can clearly characterize the worst case on channel capacity loss. In this way, we can derive the maximal capacity loss of channel with CSI difference at $(n,k)$-th subcarrier with \emph{Lemma} \ref{Lem:CapacityLossUpperbound} and  make guarantee the reliability  of our considered systems.

\subsection{Approximate Ergodic Capacity}

Notice that $\mathscr{L}_{n,n_c}^{(k,k_c)}$ varies with $\mathbf{S}_{n}^{(k)}$ and one cannot acquire $\mathbf{H}_{n_c}^{(k_c)}$ without the exact $(n_c,k_c)$, which are related to the subcarrier group size. Thus, it is reasonable to determine the size of subcarrier group first. In order to determine the group size, we need to find a capacity loss evaluation metric without the instantaneous of channel fading matrix $\mathbf{H}_{n_c}^{(k_c)}$. To this end, one needs to calculate the ergodic capacity loss and investigate the average performance of the subcarrier grouping system.

\begin{lem}\label{Lem:ErgodicCapacityLoss}
Define the center subcarrier as the $(n_c,k_c)$-th subcarrier. The ergodic capacity loss at the $(n,k)$-th subcarrier can be upper-bounded by

\begin{equation}\label{Equ:ErgodicCapacityLoss}
\begin{split}
&\mathscr{E}_{n,n_c}^{(k,k_c)} = \log_2\left( 1 + \sum_{k=1}^{v} \alpha_{k}(\gamma) \prod_{i=1}^{k-1} (u-i) \right) -\\
&{\qquad \qquad \qquad}  \log_2\left( 1 + \sum_{k=1}^{v} \alpha_{k}(\gamma_e) e^{\begin{matrix} \sum_{i=0}^{k-1} \psi (u-i) \end{matrix}} \right),
\end{split}
\end{equation}
where $\alpha _k(\gamma) = \gamma ^k v ( v - 1 ) {\cdot \cdot \cdot} (v - k + 1) / k!$. $\gamma$ and $\gamma_e$ is the $\rm{SNR}$ and $\rm{ESNR}$ at the $(n_c,k_c)$- and $(n,k)$-th subcarrier, respectively. $\psi(x)$ is defined by $\psi(x) = -\zeta + \sum_{r=1}^{x-1} \frac {1}{r}$, where $\zeta = 0.577215649{\cdot \cdot \cdot}$ is the Euler's constant. Moreover, the relative ergodic capacity loss is

\begin{equation}\label{Equ:RelativeErgodicCapacityLoss}
\begin{split}
&\mathscr{R}_{n,n_c}^{(k,k_c)} = 1- \frac{\log_2\left( 1 + \sum_{k=1}^{v} \alpha_{k}(\gamma_e) e^{\begin{matrix} \sum_{i=0}^{k-1} \psi (u-i) \end{matrix}} \right)}{\log_2\left( 1 + \sum_{k=1}^{v} \alpha_{k}(\gamma) \prod_{i=1}^{k-1} (u-i) \right)}.
\end{split}
\end{equation}

\end{lem}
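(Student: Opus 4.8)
The plan is to take expectations in the pointwise bound of Lemma~\ref{Lem:CapacityLossUpperbound} and then bound the two resulting terms separately; throughout, since the channel is i.i.d. Rayleigh, $\mathbf{S}_{n}^{(k)}$ (and $\mathbf{S}_{n_c}^{(k_c)}$, which has the same law) is a $v\times v$ central complex Wishart matrix with $u$ degrees of freedom, while $\gamma$ and $\gamma_e$ are deterministic (the latter because it is built only from the second moments in \eqref{Equ:SNROfCapacityBounds}). Writing $\mathrm{E}[\mathscr{L}_{n,n_c}^{(k,k_c)}]$ for the ergodic capacity loss, Lemma~\ref{Lem:CapacityLossUpperbound} gives
\begin{equation*}
\mathrm{E}\!\left[\mathscr{L}_{n,n_c}^{(k,k_c)}\right]\le\mathrm{E}\!\left[\log_2\det\!\left(\mathbf{I}+\gamma\mathbf{S}_{n}^{(k)}\right)\right]-\mathrm{E}\!\left[\log_2\det\!\left(\mathbf{I}+\gamma_e\mathbf{S}_{n}^{(k)}\right)\right],
\end{equation*}
so it suffices to upper-bound the first expectation and lower-bound the second; the difference of the two bounds will be exactly $\mathscr{E}_{n,n_c}^{(k,k_c)}$.

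For the first term I would apply Jensen's inequality to the concave map $\log_2$, namely $\mathrm{E}[\log_2 Y]\le\log_2\mathrm{E}[Y]$ with $Y=\det(\mathbf{I}+\gamma\mathbf{S}_{n}^{(k)})>0$, and then expand the determinant in the elementary symmetric functions of the eigenvalues of $\mathbf{S}_{n}^{(k)}$, equivalently as a sum of principal minors: $\det(\mathbf{I}+\gamma\mathbf{S}_{n}^{(k)})=1+\sum_{k=1}^{v}\gamma^{k}e_{k}$, where $e_{k}$ is the sum of all $k\times k$ principal minors of $\mathbf{S}_{n}^{(k)}$. Because the rows of the channel matrix are exchangeable, $\mathrm{E}[e_{k}]=\binom{v}{k}\,\mathrm{E}[\det\mathbf{S}']$ for a generic $k\times k$ central complex Wishart block $\mathbf{S}'$ with $u$ degrees of freedom, and the classical determinant-moment formula for complex Wishart matrices evaluates $\mathrm{E}[\det\mathbf{S}']$ in closed form as $u(u-1)\cdots(u-k+1)$, up to the normalization of $\mathbf{S}_{n}^{(k)}$. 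Since $\binom{v}{k}\gamma^{k}=\alpha_{k}(\gamma)$, collecting terms gives the first logarithm in \eqref{Equ:ErgodicCapacityLoss}.

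For the second term a lower bound is needed, so Jensen must be applied in the opposite direction. I would write $\det(\mathbf{I}+\gamma_e\mathbf{S}_{n}^{(k)})=1+\sum_{k=1}^{v}\sum_{|I|=k}\gamma_e^{k}\det\mathbf{S}_{II}$, with $\mathbf{S}_{II}$ the $|I|\times|I|$ principal submatrix of $\mathbf{S}_{n}^{(k)}$, and exploit the joint convexity of the log-sum-exp map $(t_{m})\mapsto\log_2(1+\sum_{m}e^{t_{m}})$: taking $t_{k,I}=\log(\gamma_e^{k}\det\mathbf{S}_{II})$ yields
\begin{equation*}
\mathrm{E}\!\left[\log_2\det\!\left(\mathbf{I}+\gamma_e\mathbf{S}_{n}^{(k)}\right)\right]\ge\log_2\!\left(1+\sum_{k=1}^{v}\gamma_e^{k}\sum_{|I|=k}e^{\mathrm{E}[\log\det\mathbf{S}_{II}]}\right),
\end{equation*}
i.e. each principal minor is replaced by its geometric mean. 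The remaining ingredient is $\mathrm{E}[\log\det\mathbf{S}_{II}]$ for a $k\times k$ Wishart block: by the Bartlett triangular factorization, the determinant of a $k\times k$ central complex Wishart matrix with $u$ degrees of freedom is, in distribution, a product of $k$ independent Gamma variables with shape parameters $u,u-1,\dots,u-k+1$, so $\mathrm{E}[\log\det\mathbf{S}_{II}]=\sum_{i=0}^{k-1}\psi_{0}(u-i)$, where $\psi_{0}$ is the digamma function and $\psi_{0}(n)=-\zeta+\sum_{r=1}^{n-1}\frac1r$ at integer $n$ coincides with the $\psi$ of the statement. Since all $\binom{v}{k}$ blocks of a given size are identically distributed, $\sum_{|I|=k}e^{\mathrm{E}[\log\det\mathbf{S}_{II}]}=\binom{v}{k}e^{\sum_{i=0}^{k-1}\psi(u-i)}$, and with $\binom{v}{k}\gamma_e^{k}=\alpha_{k}(\gamma_e)$ this is precisely the second logarithm in \eqref{Equ:ErgodicCapacityLoss}. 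The relative ergodic capacity loss \eqref{Equ:RelativeErgodicCapacityLoss} then follows at once, being $\mathscr{E}_{n,n_c}^{(k,k_c)}$ divided by the first logarithm, which upper-bounds the ergodic capacity $\mathrm{E}[C_{n}^{(k)}]$.

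The main obstacle is the second term. Choosing the right direction for Jensen, i.e. recognizing that $(t_{m})\mapsto\log_2(1+\sum_{m}e^{t_{m}})$ is convex so that the expectation can be pushed inside as a geometric mean of the principal minors, is the one genuinely delicate step, and it must be paired with the identity $\mathrm{E}[\log\det(\cdot)]=\sum_{i}\psi_{0}(\cdot)$ for Wishart blocks, which rests on the triangular factorization of a complex Wishart matrix together with $\mathrm{E}[\log X]=\psi_{0}(a)$ for $X$ Gamma-distributed with shape $a$; this is exactly what converts the harmonic sums and Euler's constant into the stated formula. By comparison, the Jensen upper bound for the first term and the bookkeeping of the normalization of $\mathbf{S}_{n}^{(k)}$ and of the binomial factors $\binom{v}{k}$ are routine.
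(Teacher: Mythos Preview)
Your argument is correct and, in fact, more self-contained than the paper's. In Appendix~\ref{Appen:ergodicCapacityThm} the paper does not derive the two bounds on $\mathrm{E}\!\left[\log_2\det(\mathbf{I}+\gamma\mathbf{S})\right]$ at all: it simply quotes the two-sided estimate \eqref{Equ:ErgodicCapacity} from \cite{zhang2005very}, plugs in $\gamma$ for the upper bound (applied at the center subcarrier) and $\gamma_e$ for the lower bound (applied at the $(n,k)$-th subcarrier via \eqref{Equ:BoundsOfMismatch}), and subtracts. Your proposal instead re-derives those very bounds from first principles: Jensen on $\log_2$ together with the Wishart determinant moment $\mathrm{E}[\det\mathbf{S}']=u(u-1)\cdots(u-k+1)$ for the upper side, and the convexity of the log-sum-exp map together with the Bartlett-factorization identity $\mathrm{E}[\log\det\mathbf{S}_{II}]=\sum_{i=0}^{k-1}\psi_0(u-i)$ for the lower side.

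The trade-off is the obvious one: the paper's route is a two-line citation, while yours makes transparent where the falling factorials, the harmonic sums, and Euler's constant actually originate, and in particular isolates the one nontrivial step (convex Jensen on log-sum-exp so that principal minors are replaced by their geometric means) that the cited reference hides. The only thing to keep track of in your write-up is the normalization $\mathrm{trace}\{\mathrm{E}[\mathbf{S}_n^{(k)}]\}/v=1$, which rescales both the Wishart determinant moment and the log-determinant mean; you flag this (``up to the normalization of $\mathbf{S}_n^{(k)}$'') but should carry it through explicitly so that the constants match \eqref{Equ:ErgodicCapacityLoss}.
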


\begin{proof}
See appendix \ref{Appen:ergodicCapacityThm}.
\end{proof}

\emph{Lemma} \ref{Lem:ErgodicCapacityLoss} shows that the ergodic capacity loss gives the average evaluation of the $(n,k)$-th subcarrier capacity loss. Hence, given a specific capacity loss threshold, one can easily derive the CSI difference parameter $\sigma_{m_{n,n_c}^{(k,k_c)}}^2$ with \eqref{Equ:SNROfCapacityBounds} and \eqref{Equ:ErgodicCapacityLoss}.

\section{Subcarrier Grouping and Performance Evaluation}\label{Sec:Algorithm}

According to the theoretical results obtained previously, in this section we shall employ the channel self-correlation property to propose an adaptive subcarrier grouping based precoding algorithm. It can be noted that the channel self-correlation is related to the transmit environment and mobile velocity. The sensing of transmit environment is out of the scope of this paper and the mobile velocity can be easily acquired via positioning systems, e.g. GPS, Beidou, etc. At last of this section, we analyze the performance of the proposed algorithm regarding the system cost, SNR and antenna number.

\subsection{Subcarrier Grouping Method in Double-Selective Channels}

Theoretical results in \emph{Lemma} \ref{Lem:ErgodicCapacityLoss} shows that the capacity loss are mainly depended on the second order statistics of $\mathbf{M}^{(k,k_c)}_{n,n_c}$. In order to investigate the relation between capacity loss and subcarrier group size, one needs to observe $\mathbf{M}^{(k,k_c)}_{n,n_c}$ associated with the transmit environment and mobile velocity, i.e. delay spread and Doppler spread. Its second moment can be characterized by channel self-correlation (denoted by $\mathbf{R}(f^{(k)},f^{(k_c)},t^{(n)},f^{(n_c)})$). That is
\begin{equation}\label{Equ:ChannelCorrelation}
\begin{split}
&\sigma_{\mathbf{M}^{(k,k_c)}_{n,n_c}} ^2 = 2 \bigg\{ \mathbf{R}\left(f^{(k_c)},f^{(k_c)},t^{(n_c)},t^{(n_c)}\right) \\
&{\qquad \qquad \qquad}  -\mathbf{R}\left(f^{(k)},f^{(k_c)},t^{(n)},t^{(n_c)}\right) \bigg\}.
\end{split}
\end{equation}
where $f^{(k)}$ and $f^{(k_c)}$ denote the carrier frequency of the $k$-th and $k_c$-th subcarrier, respectively. $t^{(n)}$ and $t^{(n_c)}$ denote the time slot of the $n$-th and $n_c$-th OFDM block, respectively. 

Note that the correlation function in \eqref{Equ:ChannelCorrelation} depends on four variables, with a rather complicated form.  For characterizing the CSI difference properties with the correlation function,  one usually introduces further assumptions about the physics of channel and simplify $\mathbf{R}(f^{(k)},f^{(k_c)},t^{(n)},f^{(n_c)})$. The most frequently used assumptions are \emph{Wide-Sense Stationary} (WSS) assumption and \emph{Uncorrelated Scatterers} (US) assumption, where WSS means that self-correlation depends only on time difference $t^{(n,n_c)} = |t^{(n)} - t^{(n_c)}|$, while US means that self-correlation depends only on frequency difference $f^{(k,k_c)} = |f^{(k)}-f^{(k_c)}|$. By involving both assumptions simultaneously as \emph{WSSUS} model \cite{molisch2007wireless}, we can simplify \eqref{Equ:ChannelCorrelation} as

\begin{equation}\label{Equ:ChannelCorrelationWSSUS}
\sigma _{\mathbf{M}^{(k,k_c)}_{n,n_c}} ^2 = 2 \left( \mathbf{R}\left(0,0\right) -\mathbf{R}\left(f^{(k,k_c)},t^{(n,n_c)}\right) \right).
\end{equation}

Furthermore, using the independence between the delay spread and frequency spread of double-selective channels, one can decompose the self-correlation of channel response into as 

\begin{equation}\label{Equ:CorrelationDecomposition}
\begin{split}
\mathbf{R}\left(f^{(k,k_c)},t^{(n,n_c)}\right) &= \mathbf{R}\left(f^{(k,k_c)},0\right)  \mathbf{R}\left(0,t^{(n,n_c)}\right) \\
&= \mathbf{R}_f\left(f^{(k,k_c)}\right)  \mathbf{R}_t\left(t^{(n,n_c)}\right).
\end{split}
\end{equation}
where $\mathbf{R}_f\left(f^{(k,k_c)}\right)$ and $\mathbf{R}_t\left(t^{(n,n_c)}\right)$ are defined as
\begin{equation}\label{Equ:RtRf}
\begin{cases}
\mathbf{R}_f\left(f^{(k,k_c)}\right) \triangleq  \mathbf{R}\left(f^{(k,k_c)},0\right)\\
\mathbf{R}_t\left(t^{(n,n_c)}\right) \triangleq  \mathbf{R}\left(0,t^{(n,n_c)}\right),
\end{cases}
\end{equation}
 both of them denote the frequency-domain correlation between the $k$- and $k_c$-th subcarrier in one OFDM block and the time-domain correlation between the $n$- and $n_c$-th OFDM block at one subcarrier, respectively.

As $\mathbf{H}_{n,n_c}^{(k,k_c)}$ is normalized and its entries are assumed to be i.i.d. Rayleigh fading, the correlation matrices $\mathbf{R}\left(0,0\right)$ and $\mathbf{R}\left(f^{(k,k_c)},t^{(n,n_c)}\right)$ are $v\text{-}by\text{-}v$ diagonal matrices with equal diagonal elements $v$ and $vR\left(f^{(k,k_c)},t^{(n,n_c)}\right)$. In addition, $R\left(f^{(k,k_c)},t^{(n,n_c)}\right) = \mathrm{E}\left(h^{(k)}_{n} h^{(k_c)}_{n_c}\right)$, where $h^{(k)}_{n}$ and $h^{(k_c)}_{n_c}$ are the entries of $\mathbf{H}^{(k)}_{n}$ and $\mathbf{H}^{(k_c)}_{n_c}$. Thus, \eqref{Equ:ChannelCorrelationWSSUS} can be simplified as
\begin{equation}\label{Equ:ChannelCorrelationOfH}
\sigma _{\mathbf{M}^{(k,c)}_{n,n_c}}^2 = v\sigma _{{m}^{(k,k_c)}_{n,n_c}} ^2 \mathbf{I}_v 
\end{equation}
where 
\begin{equation}\label{Equ:ChannelCorrelationOfh}
\sigma _{m^{(k,k_c)}_{n,n_c}}  ^2 = 2\left(1 - R_f\left(f^{(k,k_c)}\right)R_t\left(t^{(n,n_c)}\right)\right)
\end{equation}

Since the channel fading matrix is normalized, i.e. $\rm{trace}\left(\mathbf{S}_n\right) / v = 1$, it can be inferred that $\sigma _{{m}^{(k,k_c)}_{n,n_c}}  ^2  \in [0,2]$.

Substituting \eqref{Equ:SNROfCapacityBounds} and \eqref{Equ:ChannelCorrelationOfh} into \eqref{Equ:RelativeErgodicCapacityLoss} and denote $\mathscr{R}_{n,n_c}^{(k,k_c)}$ as $\mathscr{R}\left(R_f,R_t\right)$, one can observe that $R_f\left(f^{(k,k_c)}\right)$ and $R_t\left(t^{(n.n_c)}\right)$ play the similar role in $\mathscr{R}\left(R_f,R_t\right)$. Thus, for a given constraint of the ergodic capacity loss $\mathscr{R}\left(R_f,R_t\right) \leq \zeta_r$, the feasible zone is showed in Fig. \ref{Fig:RdomainCloss}. It is observed that when denote (1,1) as the origin, the distance from the origin to the intersections of the border and the ordinate ( abscissa ) is  $R_f^{-1}(\zeta_r)$ ( $R_t^{-1}(\zeta_r)$ ). The feasible zones (i.e. the areas inside the curves) is convex, which is shown in the following lemma.

\begin{figure}[htbp]
\centering
\includegraphics[width=0.48\textwidth]{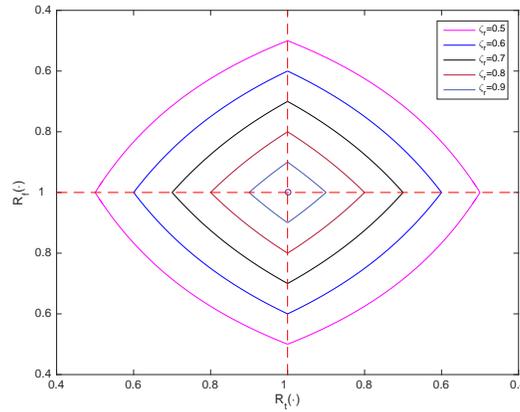}
\caption{The feasible zone of correlations when relative capacity loss threshold $\zeta_r$=0.5, 0.6, 0.7, 0.8 and 0.9.} \label{Fig:RdomainCloss}
\end{figure}

\begin{lem}\label{lem:ConvexMap}
The feasible zone described above is convex, i.e.

\begin{equation*}
\begin{split}
\lambda \mathscr{R}(R_{f1},R_{t1}) &+ (1-\lambda) \mathscr{R}(R_{f2},R_{t2}) \geq \\
& \mathscr{R}(\lambda R_{f1} + (1-\lambda) R_{f2}, \lambda R_{t1} + (1-\lambda) R_{t2}),\\
\end{split}
\end{equation*}
where $\lambda \in $ [0,1]. $R_{f1}$ and $R_{f2}$ as well as $R_{t1}$ and $R_{t2}$ are the instance of frequency and time domain correlations, respectively. Furthermore,

\begin{equation*}
\begin{split}
\lambda \mathscr{R}&(R_f(f_1), R_t(t_1)) + (1-\lambda) \mathscr{R}(R_f(f_2), R_t(t_2)) \geq \\
& \mathscr{R}(R_f( \lambda f_1+ (1-\lambda) f_2 ), R_t(\lambda t_1 + (1-\lambda) t_2))\\
\end{split}
\end{equation*}
holds. $f_1$ and $f_2$ as well as $t_1$ and $t_2$ are the instance of frequency and time, respectively.
\end{lem}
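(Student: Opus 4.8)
The plan is to collapse the two correlation arguments into a single scalar variable and then reduce both inequalities to a one–dimensional curvature check. \emph{Scalar reduction.} Substituting \eqref{Equ:ChannelCorrelationOfh} into \eqref{Equ:SNROfCapacityBounds} expresses the ESNR at the $(n,k)$-th subcarrier as
\begin{equation*}
\gamma_e = \frac{\sigma_{x_{n}^{(k)}}^2}{\sigma_{w_{n}^{(k)}}^2 + 2v\,\sigma_{x_{n}^{(k)}}^2\bigl(1 - R_f R_t\bigr)},
\end{equation*}
so $\gamma_e$ — and hence $\mathscr{R}_{n,n_c}^{(k,k_c)}$, since $\gamma$ in \eqref{Equ:RelativeErgodicCapacityLoss} is fixed once $(n_c,k_c)$ is chosen — depends on $(R_f,R_t)$ only through the product $p:=R_fR_t\in[0,1]$; write $\mathscr{R}_{n,n_c}^{(k,k_c)}=\Phi(p)$. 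Because $\gamma_e$ is increasing in $p$ and each term $\alpha_k(\gamma_e)e^{\sum_i\psi(u-i)}$ is a positive monomial in $\gamma_e$, the numerator of the ratio in \eqref{Equ:RelativeErgodicCapacityLoss} increases with $p$ while the denominator is constant; hence $\Phi$ is continuous and strictly decreasing. This alone makes the feasible zone $\{(R_f,R_t):\Phi(R_fR_t)\le\zeta_r\}=\{(R_f,R_t):R_fR_t\ge\Phi^{-1}(\zeta_r)\}$ convex (the superlevel sets of the product $R_fR_t$ are convex), and the two displayed inequalities are the stronger statements that $\mathscr{R}$, and its composition with the correlation profiles, are convex.

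\emph{First inequality.} Put $p_i:=R_{fi}R_{ti}$. Expanding $(\lambda R_{f1}+(1-\lambda)R_{f2})(\lambda R_{t1}+(1-\lambda)R_{t2})$ and using the AM–GM bound $R_{f1}R_{t2}+R_{f2}R_{t1}\ge 2\sqrt{p_1p_2}$ gives
\begin{equation*}
\bigl(\lambda R_{f1}+(1-\lambda)R_{f2}\bigr)\bigl(\lambda R_{t1}+(1-\lambda)R_{t2}\bigr)\ \ge\ \bigl(\lambda\sqrt{p_1}+(1-\lambda)\sqrt{p_2}\bigr)^2 .
\end{equation*}
Since $\Phi$ is decreasing, it therefore suffices to show $\lambda\Phi(p_1)+(1-\lambda)\Phi(p_2)\ge\Phi\bigl((\lambda\sqrt{p_1}+(1-\lambda)\sqrt{p_2})^2\bigr)$, and writing $g(q):=\Phi(q^2)$ with $q_i:=\sqrt{p_i}$ this is exactly Jensen's inequality for $g$; thus everything reduces to the convexity of $g$, equivalently the scalar inequality $2p\,\Phi''(p)+\Phi'(p)\ge 0$ on the relevant range of $p$. (Forming the $2\times 2$ Hessian of $\Phi(R_fR_t)$ and requiring positive semidefiniteness produces the same condition.) I would verify this by differentiating through the chain $p\mapsto\gamma_e\mapsto\log_2\bigl(1+\sum_{k}\alpha_k(\gamma_e)e^{\sum_{i}\psi(u-i)}\bigr)$, using that $\gamma_e$ is increasing and convex in $p$.

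\emph{Second inequality.} On the operating range where grouping is applied — frequency separations within a coherence bandwidth and block separations below the first zero of the Doppler autocorrelation — the profiles $R_f(\cdot)$ and $R_t(\cdot)$ are nonnegative, non-increasing and concave. Hence, with $\bar f:=\lambda f_1+(1-\lambda)f_2$ and $\bar t:=\lambda t_1+(1-\lambda)t_2$, one has $R_f(\bar f)\ge\lambda R_f(f_1)+(1-\lambda)R_f(f_2)=:\bar R_f$ and likewise $R_t(\bar t)\ge\bar R_t$, so $R_f(\bar f)R_t(\bar t)\ge\bar R_f\bar R_t$ by positivity. Using the monotonicity of $\Phi$ and then the convexity from the previous step,
\begin{equation*}
\begin{split}
\mathscr{R}\bigl(R_f(\bar f),R_t(\bar t)\bigr)&=\Phi\bigl(R_f(\bar f)R_t(\bar t)\bigr)\le\Phi(\bar R_f\bar R_t)=\mathscr{R}(\bar R_f,\bar R_t)\\
&\le\lambda\,\mathscr{R}\bigl(R_f(f_1),R_t(t_1)\bigr)+(1-\lambda)\,\mathscr{R}\bigl(R_f(f_2),R_t(t_2)\bigr),
\end{split}
\end{equation*}
which is the claim.

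\emph{Main obstacle.} The real work is the sign check $2p\,\Phi''(p)+\Phi'(p)\ge 0$: since $\Phi$ is a logarithm of a degree-$v$ polynomial in $\gamma_e$ and $\gamma_e$ is rational in $p$, this sign is governed in a delicate way by $\gamma$, the antenna numbers $u,v$ and the noise power, and it degrades as $p\to 0$, so the clean convexity statement holds on the high-correlation region — precisely where subcarrier grouping is meaningful. Making this operating regime precise, and quantifying the range of $f$ and $t$ over which $R_f$ and $R_t$ remain concave and non-increasing, is the crux; the remaining steps are the elementary AM–GM reduction and the fact that a decreasing function of $R_fR_t$ has convex sublevel sets.
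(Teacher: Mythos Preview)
Your approach is considerably more careful than the paper's and, in effect, tries to prove more than the paper does.

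The paper's argument is three sentences: the level curve $\mathscr{R}(R_f,R_t)=\zeta_r$ is equivalent to $R_fR_t=\text{const}$ (a hyperbola), so convexity of the feasible region is ``obvious''; then monotone decrease of $R_f(\cdot)$ and $R_t(\cdot)$ is invoked to transport this convexity to $(f,t)$-coordinates. In other words, the paper argues only convexity of the \emph{sublevel set} $\{\mathscr{R}\le\zeta_r\}$, not the Jensen-type inequality displayed in the lemma. You already isolate this distinction in your scalar-reduction paragraph --- the displayed inequalities are strictly stronger than set convexity, and the paper is really conflating the two.

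Consequently, the sign condition $2p\,\Phi''(p)+\Phi'(p)\ge 0$ that you flag as the main obstacle is never confronted by the paper; by stopping at the hyperbola picture it needs no curvature information about $\Phi$ whatsoever. For the second inequality the paper appeals only to monotonicity of the correlation profiles, whereas you additionally (and, for a genuine Jensen bound, necessarily) assume concavity on the operating range. If your aim is simply to match the paper, you can discard the AM--GM reduction and the second-derivative analysis and just record that $\{R_fR_t\ge c\}$ is convex, together with the monotonicity remark. The only downstream use of the lemma is to inscribe a rhombus inside the feasible region (Theorem~\ref{Thm:absConstraint}), and for that the paper's weaker set-convexity conclusion already suffices; your extra work toward the literal Jensen inequality is sound in outline but is not what the paper actually proves.
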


\begin{proof}
Notice that $\mathscr{R}(R_f,R_t) = \zeta_r$ is equivalent to $\sigma _{m^{(k,k_c)}_{n,n_c}}  ^2 = 2(1 - R_f(f^{(k,k_c)})R_t(t^{(n,n_c)})) = C$, where $C$ is a constant derived from \eqref{Equ:RelativeErgodicCapacityLoss}. Thus, the curve is a hyperbola and the convex property is obvious. Considering the correlation function $R_f(f)$ and $R_t(t)$ are monotone decreasing, the convex property holds when using $t$ and $f$ as the abscissa and ordinate.
\end{proof}

The second equation in \emph{Lemma} \ref{lem:ConvexMap} shows that the convex property holds for the boundary, when using $t$ and $f$ as the abscissa and ordinate. Thus, connect the four intersections between the boundary and coordinate axes with segments and form a rhombus with the transverse and longitudinal diagonal length as $2R_t^{-1}(\zeta_r)$ and $2R_f^{-1}(\zeta_r)$, all the subcarriers inside this rhombus satisfy the capacity loss constraint $\mathscr{R}(R_f,R_t) \leq \zeta_r$. Thus, the subcarriers inside one rhombus can be grouped and apply a unitary precoding matrix, which is shown in Fig. \ref{Fig:CellularNetwork}. It can be noticed that the subcarrier grouping scheme varies with channel correlation status, which is determined by transmit environment and mobile velocity. The theoretical result is summarized in  Theorem 1.

\begin{figure}
\centering
\includegraphics[width=0.45\textwidth]{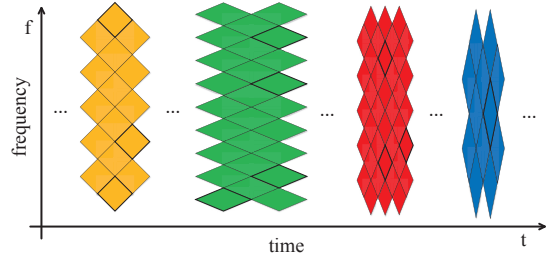}
\caption{Grouping scheme vaires with the channel correlation status.} \label{Fig:CellularNetwork}
\end{figure}

\begin{thm}\label{Thm:absConstraint}
Given the capacity loss constraint $\mathscr{R}(R_f,R_t) \leq \zeta_r$  with transmit power $\rho_t$, all the subcarriers with identification $(n,k)$ that satisfy
\begin{equation*}
\pounds (n_c,k_c) = \left\{(n,k)|R(f^{(k,k_c)},  t^{n,n_c}) \geq \frac{2-\sigma _{m^{(k,k_c)}_{n,n_c}}  ^2}{2}\right\}
\end{equation*}
are grouped into subcarrier group $\pounds(n_c,k_c)$. The second moment $\sigma _{m^{(k,k_c)}_{n,n_c}}  ^2$ is given by

\begin{equation*}
\begin{split}
1- \frac{\log_2\left( 1 + \sum_{k=1}^{v} \alpha_{k}(\gamma_e) e^{\begin{matrix} \sum_{i=0}^{k-1} \psi (u-i) \end{matrix}} \right)}{\log_2\left( 1 + \sum_{k=1}^{v} \alpha_{k}(\gamma) \prod_{i=1}^{k-1} (u-i) \right)} = \zeta_r,
\end{split}
\end{equation*}
where
\begin{equation*}
\left\{
\begin{aligned}
\gamma _e &= \frac{\sigma _{x_{n}^{(k)}} ^2}{v \sigma_{m_{n,n_c}^{(k,k_c)}} ^2 \sigma _{x_{n}^{(k)}} ^2 + \sigma_{w_{n}^{(k)}}^2} = \frac{\rho_t}{v \sigma_{m_{n,n_c}^{(k,k_c)}} ^2 \rho_t + N_t} \\
\gamma &= \frac{\sigma _{x_{n}^{(k)}} ^2}{\sigma_{w_{n}^{(k)}}^2} = \frac{\rho_t}{N_t}.
\end{aligned}
\right\}.
\end{equation*}

where $\beta = \frac{2-\sigma _{m^{(k,k_c)}_{n,n_c}}  ^2}{2}$ represents the correlation threshold. The subcarrier grouping method is to divide the whole space with a serious of rhombuses, where the transverse and longitudinal diagonal is $2R_t^{-1}(\beta)$ and $2R_f^{-1}(\beta)$, respectively. All the subcarriers inside the same rhombus can be put into one group and apply the same precoding matrix.
\end{thm}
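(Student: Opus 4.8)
The plan is to assemble Theorem~\ref{Thm:absConstraint} from three ingredients already in hand: the relative ergodic capacity-loss expression of \emph{Lemma}~\ref{Lem:ErgodicCapacityLoss}, the WSSUS correlation identity \eqref{Equ:ChannelCorrelationOfh}, and the convexity of the feasible zone from \emph{Lemma}~\ref{lem:ConvexMap}. Concretely, I would translate the capacity-loss budget $\mathscr{R}(R_f,R_t)\le\zeta_r$ first into a threshold on the second moment $\sigma_{m_{n,n_c}^{(k,k_c)}}^2$, then into a threshold $\beta$ on the channel self-correlation $R(f^{(k,k_c)},t^{(n,n_c)})$, and finally into an inscribed rhombus that tiles the time--frequency plane. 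The first step is monotonicity: using $\gamma=\sigma_{x_n^{(k)}}^2/\sigma_{w_n^{(k)}}^2=\rho_t/N_t$, expression \eqref{Equ:SNROfCapacityBounds} reduces to the stated $\gamma=\rho_t/N_t$ and $\gamma_e=\rho_t/(v\sigma_{m_{n,n_c}^{(k,k_c)}}^2\rho_t+N_t)$, so $\gamma_e$ is strictly decreasing in $\sigma_{m_{n,n_c}^{(k,k_c)}}^2$. In \eqref{Equ:RelativeErgodicCapacityLoss} the denominator $\log_2(1+\sum_{k=1}^v\alpha_k(\gamma)\prod_{i=1}^{k-1}(u-i))$ is independent of $\sigma_{m_{n,n_c}^{(k,k_c)}}^2$, while the numerator $\log_2(1+\sum_{k=1}^v\alpha_k(\gamma_e)e^{\sum_{i=0}^{k-1}\psi(u-i)})$ is strictly increasing in $\gamma_e$, since each $\alpha_k(\gamma_e)=\gamma_e^k v(v-1)\cdots(v-k+1)/k!$ increases with $\gamma_e$ and the exponential weights are positive constants. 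Hence $\mathscr{R}_{n,n_c}^{(k,k_c)}$ is strictly increasing in $\sigma_{m_{n,n_c}^{(k,k_c)}}^2$ (and equals $0$ at $\sigma_{m_{n,n_c}^{(k,k_c)}}^2=0$), so $\mathscr{R}_{n,n_c}^{(k,k_c)}\le\zeta_r$ is equivalent to $\sigma_{m_{n,n_c}^{(k,k_c)}}^2\le\sigma^2_{\mathrm{th}}$, where $\sigma^2_{\mathrm{th}}$ is the unique root of the implicit equation $\mathscr{R}_{n,n_c}^{(k,k_c)}=\zeta_r$ displayed in the theorem.

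Next I would pass to the correlation domain. Under the WSSUS model with independent delay and Doppler spreads, \eqref{Equ:CorrelationDecomposition}--\eqref{Equ:ChannelCorrelationOfh} give $\sigma_{m_{n,n_c}^{(k,k_c)}}^2=2\big(1-R_f(f^{(k,k_c)})R_t(t^{(n,n_c)})\big)=2\big(1-R(f^{(k,k_c)},t^{(n,n_c)})\big)$. Therefore $\sigma_{m_{n,n_c}^{(k,k_c)}}^2\le\sigma^2_{\mathrm{th}}$ is equivalent to $R(f^{(k,k_c)},t^{(n,n_c)})\ge(2-\sigma^2_{\mathrm{th}})/2=\beta$, which is exactly the membership rule defining $\pounds(n_c,k_c)$ in the theorem.

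Finally I would build the rhombus. The feasible set $\{(t,f):R_f(f)R_t(t)\ge\beta\}$ is convex in the $(t,f)$-plane by the second inequality of \emph{Lemma}~\ref{lem:ConvexMap}, and it is symmetric about both axes because $R_f$ and $R_t$ depend only on $|f^{(k)}-f^{(k_c)}|$ and $|t^{(n)}-t^{(n_c)}|$. Since $R_f(0)=R_t(0)=1$ and both $R_f,R_t$ are monotone decreasing (hence invertible), the boundary meets the time axis at $t=\pm R_t^{-1}(\beta)$ and the frequency axis at $f=\pm R_f^{-1}(\beta)$. The rhombus spanned by these four boundary points is their convex hull, so by convexity of the feasible zone it lies entirely inside it; consequently every subcarrier whose offset $(t^{(n,n_c)},f^{(k,k_c)})$ falls in this rhombus satisfies $\mathscr{R}_{n,n_c}^{(k,k_c)}\le\zeta_r$. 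Its transverse (time) diagonal is $2R_t^{-1}(\beta)$ and its longitudinal (frequency) diagonal is $2R_f^{-1}(\beta)$; translating copies of it to tile the subcarrier lattice partitions all subcarriers into groups, each served by one unitary precoder at its center, which is the claimed scheme.

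The main obstacle I expect is not the monotonicity bookkeeping but the rhombus step. One must verify that the four axis intersections really are boundary points (requiring $R_f,R_t$ continuous and strictly decreasing so that $R_f^{-1},R_t^{-1}$ are well defined), and, more importantly, that convexity in the $(t,f)$ coordinates — not merely in the $(R_f,R_t)$ coordinates — is what licenses the inscribed rhombus; this is precisely why the second inequality of \emph{Lemma}~\ref{lem:ConvexMap} is invoked. It is also worth noting explicitly in the proof that the rhombus grouping is a conservative (sufficient) simplification of the true curved feasible zone, so no capacity-loss guarantee is violated even though some admissible subcarriers outside the rhombus are left out.
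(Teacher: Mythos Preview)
Your proposal is correct and follows essentially the same route as the paper, which assembles Theorem~\ref{Thm:absConstraint} from \emph{Lemma}~\ref{Lem:ErgodicCapacityLoss}, the WSSUS identity \eqref{Equ:ChannelCorrelationOfh}, and the convexity in \emph{Lemma}~\ref{lem:ConvexMap}, and then inscribes the rhombus through the four axis intersections of the feasible zone. Your explicit monotonicity argument for $\mathscr{R}_{n,n_c}^{(k,k_c)}$ in $\sigma_{m_{n,n_c}^{(k,k_c)}}^{2}$ is a useful detail the paper leaves implicit, and your closing remarks on the well-definedness of $R_f^{-1},R_t^{-1}$ and on the rhombus being a conservative inner approximation are exactly the caveats worth recording.
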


 Theorem 1 indicates that the grouping scheme is mainly determined by the capacity loss threshold $\zeta_r$ and channel correlation status $R(f,t)$ in double selective fading channels. Hence, the grouping scheme is stable when capacity loss threshold $\zeta_r$ and wireless channel correlation status remain unchanged. On the contrary, if channel correlation status varies, the corresponding  grouping scheme must be modified immediately for sustaining channel capacity loss threshold, which is shown in Fig. \ref{Fig:CellularNetwork}. Algorithm \ref{Alg:Framwork} explains the corresponding subcarrier grouping procedure in detail.

Furthermore, we also derive the relation among the subcarrier group size, $\zeta_r$ and $R(f,t)$, which is given in the following theorem.

\begin{thm}\label{Thm:SubcarrierGroupingSize} In double-selective channels, given the channel correlation threshold $\beta$, which can be driven from theorem \ref{Thm:absConstraint}, the subcarrier group size can be determined by

\begin{equation}\label{Equ:calculateGroupSize}
S = \frac{1}{2}\lfloor\frac{2R_f^{-1}(\beta)}{B_f}\rfloor \lfloor\frac{2R_t^{-1}(\beta)}{B_t}\rfloor,
\end{equation}
where the parameter $B_f$ and $B_t$ are the frequency distance between adjacent subcarriers and the time interval between adjacent OFDM blocks, respectively.
$\lfloor x \rfloor$ denotes the largest integer less than or equal to $x$.
\end{thm}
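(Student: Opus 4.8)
The plan is to read $S$ off as a lattice-point count. By Theorem~\ref{Thm:absConstraint}, once the correlation threshold $\beta$ is fixed, the subcarriers grouped with a center $(n_c,k_c)$ are precisely those sitting, in the $(t,f)$ plane, inside the axis-aligned rhombus centered at $(t^{(n_c)},f^{(k_c)})$ whose diagonals along the two axes have lengths $2R_t^{-1}(\beta)$ (time direction) and $2R_f^{-1}(\beta)$ (frequency direction), while the subcarriers themselves form a uniform grid with spacing $B_t$ in time and $B_f$ in frequency. Hence $S=\#\{(n,k):(t^{(n)},f^{(k)})\in\text{rhombus}\}$, and the whole statement reduces to counting grid points inside a rhombus.

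The first step is to compare the rhombus with its bounding axis-aligned rectangle, which has the same center and sides $2R_t^{-1}(\beta)$ and $2R_f^{-1}(\beta)$. Along a row, a grid step of length $B_t$ can be taken only while it stays inside, so that rectangle holds about $\lfloor 2R_t^{-1}(\beta)/B_t\rfloor$ subcarriers per row and, likewise, about $\lfloor 2R_f^{-1}(\beta)/B_f\rfloor$ per column; the $\pm1$ ambiguity from the grid phase is absorbed into this estimate. The second step is the elementary area fact that a rhombus has exactly half the area of the rectangle built on its diagonals, so to leading order it contains half as many grid points. Combining the two steps gives $S=\tfrac12\lfloor 2R_f^{-1}(\beta)/B_f\rfloor\lfloor 2R_t^{-1}(\beta)/B_t\rfloor$, which is \eqref{Equ:calculateGroupSize}. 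To make "half as many" quantitative I would normalize coordinates (shift to the center, rescale the axes by $1/B_t$ and $1/B_f$) so the grid becomes $\mathbb{Z}^2$ and the rhombus becomes $\{(p,q):|p|/M_t+|q|/M_f\le1\}$ with $M_t=R_t^{-1}(\beta)/B_t$, $M_f=R_f^{-1}(\beta)/B_f$, and then do the explicit slice-by-slice count $\sum_{p}\bigl(2\lfloor M_f(1-|p|/M_t)\rfloor+1\bigr)$ over integers $|p|\le M_t$, showing its leading term is $2M_tM_f=\tfrac12(2M_t)(2M_f)$.

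The main obstacle is that the slice-by-slice sum genuinely carries a lower-order remainder — an $O(M_t+M_f)$ boundary contribution plus a $\gcd$-type edge term, exactly the correction one expects from Pick's theorem — which does not appear in \eqref{Equ:calculateGroupSize}. So the honest reading of Theorem~\ref{Thm:SubcarrierGroupingSize} is as the leading-order group size, and the argument should state explicitly that the omitted terms are of lower order than $S$ itself (which grows like the product of the two diagonal lengths) in the operating regime where the rhombus spans many subcarriers; the convexity and axis-alignment of the feasible rhombus, already established around Lemma~\ref{lem:ConvexMap} and used in Theorem~\ref{Thm:absConstraint}, are what make the bounding-rectangle comparison legitimate in the first place. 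A secondary, minor point is a parity/tie-breaking convention: when $\lfloor 2R_t^{-1}(\beta)/B_t\rfloor$ or $\lfloor 2R_f^{-1}(\beta)/B_f\rfloor$ is odd, the rhombus center need not coincide with a subcarrier, and one must fix the convention (half-open cells, or re-centering on the nearest subcarrier) under which the factor $\tfrac12$ and the product are clean; I expect this to cost a single line rather than real work.
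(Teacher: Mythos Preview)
Your approach is essentially the paper's: count subcarriers along each diagonal as $S_f=\lfloor 2R_f^{-1}(\beta)/B_f\rfloor$ and $S_t=\lfloor 2R_t^{-1}(\beta)/B_t\rfloor$, then invoke the rhombus area formula $S=\tfrac12 S_tS_f$. The paper's own proof is in fact shorter and more heuristic than yours---it simply states those two diagonal counts and applies ``the area formula of rhombuses'' without any slice-by-slice justification, so your worries about the $O(M_t+M_f)$ boundary remainder and parity conventions, while mathematically legitimate, go beyond what the paper attempts; the formula is implicitly treated as a design-level estimate rather than an exact lattice count.
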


\begin{proof}
For the subcarriers are grouped into a serious of rhombuses, where the transverse and longitudinal diagonal length equal to $2R_t^{-1}(\beta)$ and $2R_f^{-1}(\beta)$, respectively. The corresponding OFDM block number and subcarrier number are, respectively,
\begin{equation*}
S_f = \lfloor\frac{2R_f^{-1}(\beta)}{B_f}\rfloor
\end{equation*}
and
\begin{equation*}
S_t = \lfloor\frac{2R_t^{-1}(\beta)}{B_t}\rfloor.
\end{equation*}
Therefore, the number of subcarriers contained in this subcarrier grouping rhombus can be driven by the area formula of rhombuses, i.e.
\begin{equation*}
S = \frac{1}{2} S_t S_f = \frac{1}{2}\lfloor\frac{2R_f^{-1}(\beta)}{B_f}\rfloor \lfloor\frac{2R_t^{-1}(\beta)}{B_t}\rfloor.
\end{equation*}
\end{proof}

This theorem can be used to estimate algorithm complexity and computation cost for subcarrier grouping.

\begin{algorithm}[htb] 
\renewcommand{\algorithmicrequire}{\textbf{Input:}}
\renewcommand\algorithmicensure {\textbf{Output:}}
\renewcommand\algorithmicrepeat {\textbf{Repeat:}}
\renewcommand\algorithmicuntil {\textbf{Until:}}
\caption{Precode with subcarrier grouping method.} 
\label{Alg:Framwork} 
\begin{algorithmic}
\Require\\
\begin{enumerate}
\item The set of system parameters, $\{N_t , N_r , K\}$;  
\item Channel state training sequence, $S$;  
\item The working SNR at the receivers, $\gamma = \sigma _x ^2 / \sigma _w ^2$;  
\item Capacity loss threshold, $\zeta_r$.
\end{enumerate}

\Ensure precoded signals $\mathbf{F}_{n_c}^{(k_c)} \mathbf{X}_{n}^{(k)}$\\ 
\textbf{Initialization:} Estimate the channel self-correlation $R(t,f)$ and give the subcarrier grouping scheme by \emph{Theorem} \ref{Thm:absConstraint} with capacity loss threshold $\zeta_r$; \label{Alg:CalculateN}

\Repeat\\
\begin{enumerate}
\item Estimate the channel state information $\mathbf{H}_{n_c}^{(k_c)}$ at the center of each subcarrier groups from the training sequence $S$; \label{Alg:EstimateCSIT}

\item Generate $\mathbf{F}_{n_c}^{(k_c)}$ with $\mathbf{H}_{n_c}^{(k_c)}$; \label{Alg:GeneratePrecoder}

\item Get the CSI difference $\sigma_{m_{n,n_c}^{(k,k_c)} }^2$ by \eqref{Equ:ChannelCorrelationOfh} and derive the exact capacity lower bounds at each individual subcarriers by \eqref{Equ:BoundsOfMismatch}; \label{Alg:ExactCapacity}

\item Arrange the bit rates to symbols at each individual subcarriers by the capacities derived in Step. \ref{Alg:ExactCapacity} and precode the symbols with $\mathbf{F}_{n_c}^{(k_c)}$;
\end{enumerate}

\Until{The subcarrier grouping size changed.}

\end{algorithmic}
\end{algorithm}

\subsection{Performance Analysis}\label{Sec:PerformanceAnalysis}

In this subsection, we give the brief performance analysis of proposed subcarrier grouping scheme regarding the system cost, SNR and antenna number.

\subsubsection{System Cost}
The system cost in one single subcarrier is considered. Comparing to the subcarrier grouping scheme with fixed group size, the proposed scheme needs to estimate channel correlation status. Denote the subcarrier number in the considered domain is $K M_{t}$, where $K$ is subcarrier number in one OFDM block and $M_{t}$ is the number of OFDM blocks. As known to all, the most common precoding algorithm in each individual subcarrier requires $O(v^3)$ operations\cite{vu2007mimo,jiang2005joint,jiang2005uniform}. Thus, the precoding complexity on all subcarriers is $O(KM_{t} v^3)$. In addition, estimating channel correlation requires $\left(O(K^2 M_{t})+O(M_t^2)\right)$ operations\cite{therrien1992discrete}.

Note that the computation of correlation threshold $\beta$ only needs to calculate a deterministic function one time in the considered domain. Hence, its computation complexity is $O(1)$. Thus, the total system computational complexity composed by the computing of precoding matrices and the estimation of channel correlation is
\begin{equation}\label{Equ:complexity}
\begin{split}
\digamma &= O \left( \frac{KM_{t}}{S} v^3 \right) + O(K^2 M_{t})+O(M_t^2)+O(1)\\
&= O \left( \frac{KM_{t}}{S} v^3 + K^2 M_{t} +M_t^2 \right),
\end{split}
\end{equation}
where $S$ is the subcarrier group size denoted in \emph{Theorem} 2 . Moreover, the average precoding complexity on each individual subcarriers in  the considered domain is
\begin{equation}\label{Equ:Avecomplexity}
\begin{split}
\digamma_{ave} = \frac{\digamma}{K M_t} &= O \left( \frac{v^3}{S} \right) + O(K)+O \left( \frac{M_t}{K} \right)\\
&= O \left( \frac{v^3}{S}  + K+ \frac{M_t}{K} \right).
\end{split}
\end{equation}

\subsubsection{Signal to Noise Ratio}
\emph{Lemma} \ref{Lem:ErgodicCapacityLoss} shows that the introduced interference caused by CSI difference can reduce the channel capacity, especially in the scenarios with high SNR. The effects of SNR on capacity loss is analyzed as follows in the scenarios with low and high SNRs, respectively.

\emph{a) Low \rm{SNR} scenarios}:

The ESNR satisfies $\lim\limits_{\gamma \rightarrow{0}} \gamma_e = \gamma$. Hence, $\lim\limits_{\gamma \rightarrow{0}} C_{n}^{(k)} = C_{n_c}^{(k_c)}$ and the capacity loss $\mathscr{L}_{n,n_c}^{(k,k_c)}$ can be neglected in low SNR scenarios.

\emph{b) High \rm{SNR} scenarios}:

The ESNR satisfies
$\lim\limits_{\gamma \rightarrow{\infty}} \gamma_e = \frac{1}{v \sigma_{m_{n,n_c}^{(k,k_c)}}^{2}}$ and $\lim\limits_{\gamma \rightarrow{\infty}} C_{n}^{(k)} = \log_{2} \det \left( \mathbf{I} + \frac{\mathbf{S}_n^{(k)}}{v\sigma_{m_{n,n_c}^{(k,k_c)}} ^2} \right)$, which is a constant. Thus, channel capacity at the $(n,k)$-th subcarrier cannot be improved limitlessly by increasing SNR individually, and capacity loss $\mathscr{L}_{n,n_c}^{(k,k_c)}$ approaches infinity at extremely high SNR regimes. Therefore, properly SNR needs to carefully generated to achieve the balance between channel capacity and capacity loss at the $(n,k)$-th subcarrier. Simulation results in Section \ref{sec:SimulationSNR} shows that the optimum SNR is

\begin{equation}\label{equ:optimumSNR}
\gamma = \frac{1}{v \sigma_m^2}.
\end{equation}
In that SNR, the introduced interference is equal to the AWGN and system reaches the balance between spectrum efficiency and energy efficiency.


\subsubsection{antenna number}
The communication system in this paper is assumed to be power constrained. If the transmit power is uniformly allocated to all transmit antennas, then the SNR and ESNR at the $(n_c,k_c)$- and $(n,k)$-th subcarrier are
\begin{equation*}
\gamma = \frac{\rho}{N_t}
\end{equation*}
and 
\begin{equation*}
\gamma _ {e}^{(n,k)} = \frac{\rho/N_t}{v \sigma_{m_{n,n_c}^{(k,k_c)}} ^2 \rho/N_t+1},
\end{equation*}
respectively.

The property of $\mathscr{R}_{n,n_c}^{(k,k_c)}$ with respect to antenna number can be summarized by following Corollary.

\begin{cor}\label{cor:antennaNumber}
The relative capacity loss is monotonic decreasing with respect to the antenna number. As the antenna number approaches infinity,  $\mathscr{R}_{n,n_c}^{(k,k_c)}$ converges to
\begin{equation}\label{equ:AntennaLimitCapacityloss}
\lim\limits_{v \rightarrow \infty} \mathscr{R}_{n,n_c}^{(k,k_c)} = 1 - \frac{ \log\left( 1+\frac{\rho_t}{\rho_t \sigma _{{m}^{(k,k_c)}_{n,n_c}}  ^2  + 1 } \right) }{\log(1+\rho_t)}.
\end{equation}
where the center subcarrier of this subcarrier group is assumed to be $(n_c,k_c)$
\end{cor}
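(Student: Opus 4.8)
\emph{Proof proposal.} I would work directly from the closed form \eqref{Equ:RelativeErgodicCapacityLoss} of Lemma~\ref{Lem:ErgodicCapacityLoss}, substituting the power-constrained quantities displayed just above the corollary, $\gamma=\rho_t/N_t$ and $\gamma_e=\dfrac{\rho_t/N_t}{v\,\sigma_{m_{n,n_c}^{(k,k_c)}}^{2}\,\rho_t/N_t+1}$. Write $\mathscr{R}_{n,n_c}^{(k,k_c)}=1-\mathcal{N}/\mathcal{D}$, where $\mathcal{D}$ is the logarithm in the denominator of \eqref{Equ:RelativeErgodicCapacityLoss} (the approximate ideal ergodic capacity) and $\mathcal{N}$ the logarithm in its numerator (the approximate interference-limited ergodic capacity). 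With $v=N_t$ one has the convenient invariants $v\gamma=\rho_t$ and $\gamma_e/\gamma=(1+\rho_t\,\sigma_{m_{n,n_c}^{(k,k_c)}}^{2})^{-1}$, so that the dependence on the antenna number enters only \emph{structurally}: through the length of the two sums and through the weights $\alpha_k(\cdot)$ and $e^{\sum_i\psi(u-i)}$.

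\emph{Monotonicity.} Both $\mathcal{N}$ and $\mathcal{D}$ increase with the antenna number, so $\mathscr{R}_{n,n_c}^{(k,k_c)}$ decreases iff $\mathcal{N}/\mathcal{D}$ is nondecreasing. Upon incrementing the antenna number and writing the resulting quantities $\mathcal{N}(v),\mathcal{D}(v)$, a one-line rearrangement makes $\mathcal{N}(v+1)/\mathcal{D}(v+1)\ge\mathcal{N}(v)/\mathcal{D}(v)$ equivalent to
\begin{equation*}
\mathcal{N}(v+1)-\mathcal{N}(v)\ \ge\ \frac{\mathcal{N}(v)}{\mathcal{D}(v)}\,\bigl(\mathcal{D}(v+1)-\mathcal{D}(v)\bigr),
\end{equation*}
i.e.\ the marginal capacity gain that one extra spatial mode brings the interference-limited system is at least the current capacity fraction of the ideal gain. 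I would establish this by expanding both increments from the explicit polynomials in Lemma~\ref{Lem:ErgodicCapacityLoss}: the incremental terms in $\mathcal{N}$ carry the digamma factor $e^{\psi(u-k+1)}$, which behaves like $u-k+1$ and so supplies an extra $\Theta(u)$-fold weight relative to the matching $(u-k)$-type factor in $\mathcal{D}$; this extra weight compensates for the smaller effective SNR $\gamma_e<\gamma$ and, after a term-by-term comparison carried out uniformly in $v$, yields the displayed inequality. I expect this uniform term-by-term estimate to be the principal technical obstacle.

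\emph{Limit.} For $v\to\infty$ (hence $u\to\infty$), I would evaluate each sum by a largest-term/Laplace argument: using Stirling on the binomial factors in $\alpha_k(\cdot)$ and $\psi(u-i)=\ln(u-i)+o(1)$, the $k$-th summand can be written as $\exp\{v\,\phi_v(k/v)\}$ with $\phi_v$ converging to an explicit concave rate function, so the sum is controlled by its maximal term and $\tfrac1v\mathcal{D}$, $\tfrac1v\mathcal{N}$ converge to the maxima of the respective rate functions. Optimising these maxima with the invariants $v\gamma=\rho_t$, $v\gamma_e=\rho_t/(1+\rho_t\sigma_{m_{n,n_c}^{(k,k_c)}}^{2})$ is expected to collapse $\mathcal{D}$ to $\log(1+\rho_t)$ and $\mathcal{N}$ to $\log\!\bigl(1+\tfrac{\rho_t}{\rho_t\sigma_{m_{n,n_c}^{(k,k_c)}}^{2}+1}\bigr)$, so that $\lim\mathcal{N}/\mathcal{D}$ equals the quotient in \eqref{equ:AntennaLimitCapacityloss}; monotonicity then shows the convergence is from above, consistent with the lower-bound reading of \eqref{Equ:BoundsOfMismatch}. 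The delicate bookkeeping here is to sharpen the dominant-term identification enough that the subexponential, $O(v)$-sized corrections cancel in the \emph{ratio} $\mathcal{N}/\mathcal{D}$, not merely in $\mathcal{N}$ and $\mathcal{D}$ separately.
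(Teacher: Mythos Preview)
Your proposal is genuinely different from the paper's proof, and considerably more elaborate.

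\textbf{Limit.} The paper does not work from the Lemma~\ref{Lem:ErgodicCapacityLoss} sums at all. Instead it returns to the underlying capacity expression \eqref{Equ:CapacityWithoutMismatch}--\eqref{Equ:BoundsOfMismatch} and invokes the standard massive-MIMO channel-hardening fact $\lim_{v\to\infty}\mathbf{S}_{n}^{(k)}=u\,\mathbf{I}_v$ (a law of large numbers for Wishart matrices, see \cite{sellathurai2010mimo}). Then $\gamma\,\mathbf{S}_n^{(k)}\to\rho_t\,\mathbf{I}_v$ and $\gamma_e\,\mathbf{S}_n^{(k)}\to\frac{\rho_t}{\rho_t\sigma_{m}^2+1}\,\mathbf{I}_v$, so the two capacities are $v\log(1+\rho_t)$ and $v\log\bigl(1+\rho_t/(\rho_t\sigma_{m}^2+1)\bigr)$, and the ratio drops out immediately. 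This is a two-line computation; your Laplace/largest-term program on the $\alpha_k$ sums would eventually reproduce the same limit, but the bookkeeping you flag (making the $O(v)$ corrections cancel in the ratio, and reconciling the fact that $\mathcal{N}$ and $\mathcal{D}$ come from \emph{different} sides of the bound \eqref{Equ:ErgodicCapacity}) is real work that the paper bypasses entirely.

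\textbf{Monotonicity.} Here the paper is much looser than you: it simply observes that with fixed total power the per-antenna SNR $\gamma=\rho_t/N_t$ decreases in the antenna number, and appeals to the SNR discussion (low SNR $\Rightarrow$ small relative loss) to conclude $\mathscr{R}_{n,n_c}^{(k,k_c)}$ is decreasing. That is a heuristic, not the term-by-term increment comparison you outline. Your route is more rigorous in intent, but the inequality you set up is not obviously true termwise---the digamma weight $e^{\psi(u-k+1)}$ is in fact \emph{smaller} than $u-k$, not larger, so the ``extra $\Theta(u)$-fold weight'' remark is in the wrong direction and the compensation argument as written would fail. If you want a clean monotonicity proof, it is easier to mirror the paper's idea: show $\mathscr{R}$ is increasing in $\gamma$ at fixed $\sigma_m^2$ (differentiate \eqref{Equ:RelativeErgodicCapacityLoss} in $\gamma$, or argue from the concavity of $\log\det$), then note $\gamma$ is decreasing in the antenna number.
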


\begin{proof}
The monotonicity of $\mathscr{R}_{n,n_c}^{(k,k_c)}$ is easily to get.  Since the total power is fixed, the power allocated into each single transmit antenna will decreases as the antenna number increases, resulting in $\mathscr{R}_{n,n_c}^{(k,k_c)}$ decreasing with respect to $\gamma$.

As the antenna number goes to infinity, $v \rightarrow \infty$, e.g. in the massive MIMO regimes, $\lim\limits_{v \rightarrow \infty} \mathbf{S}_{n}^{(k)} = u \mathbf{I}_v$\cite{sellathurai2010mimo}. In this case, the channel capacity at the $(n,k)$-th subcarrier is
\begin{equation*}
C_{n}^{(k)} = v \log \left(1+\frac{\rho_t}{\rho_t \sigma _{{m}^{(k,k_c)}_{n,n_c}}  ^2  + 1} \right),
\end{equation*}
and the channel capacity at the $(n_c,k_c)$-th subcarrier is
\begin{equation*}
C_{n_c}^{(k_c)} = v\log(1+\rho_t).
\end{equation*}

Hence, the relative capacity loss at the $(n,k)$-th subcarrier is
\begin{equation*}
\mathscr{R}_{n,n_c}^{(k,k_c)} = \frac{ C_{n_c}^{(k_c)} - C_{n}^{k} }{C_{n_c}^{(k_c)}} = 1 - \frac{ \log\left( 1+\frac{\rho_t}{\rho_t \sigma _{m_{n,n_c}^{(k,k_c)}}^2  + 1 } \right) }{\log(1+\rho_t)} ,
\end{equation*}
which is constant.
\end{proof}

\section{Numerical Results}\label{Sec:NumRes}

In this section, some numerical results are presented to show the validity of our theoretical analysis and the effectiveness of proposed adaptive subcarrier grouping based precoding algorithm.  In the simulations, the total transmit power $\rho$ is allocated to each transmit antenna equally. Hence, the SNR at each individual antennas is $\gamma = \rho / N_t$. $B_f$ is assumed as 14KHz and $B_t = \frac{1}{B_f} = 71 us$, referencing datas from \cite{dahlman20073g}. The fading channels are assumed to be i.i.d. Rayleigh channels.

With these assumptions, we evaluate the theoretical results with COST207 channel model and Jakes multipath model, which are widely used for simulating the frequency-selectivity and the time-selectivity of wireless channel fading models. For the monotonicity of the channel capacity loss function with respect to SNR and the variance of CSI difference, the variation tendency between absolute capacity loss and relative capacity loss is identical. Thus, for sake of concise, only the results on relative capacity loss is simulated.

\subsection{Theoretical Results on variant correlation thresholds and antenna numbers}\label{sec:SimulationSNR}

The capacity loss on variant channel correlation is depicted in Fig. \ref{Fig:GroupingRelativeCapacityLoss}, wherein the antenna regime is $8 \times 8$. It can be observed that the relative capacity loss grows in S-type regarding SNR. This is because in low, medium and high SNR regimes, $v\sigma_{m_{n,n_c}^{(k,k_c)}} ^2 \sigma _{x_{n}^{(k)}} ^2$ (i.e. introduced interference) is relatively low, comparable and much too high compared to $\sigma_{w_{n}^{(k)}}^2$ (i.e. AWGN), respectively. Therefore, the relative capacity loss grows sharply at medium SNR, and as shown in previous analysis in Section \ref{Sec:PerformanceAnalysis},  the relative capacity loss grows slowly at low and high SNR. Thus, proposed subcarrier grouping scheme may perform well in low SNR regimes and achieve balance between channel capacity and capacity loss in medium SNR, which has been mentioned in \eqref{equ:optimumSNR}.

Another important observation is that with the increasing of $\beta$ (i.e. correlation threshold), the system can acquire less capacity loss at the same SNR, which is intuitionistic. This is because when $R_{tf}$ (i.e. the channel correlation between the $(n,k)$- and $(n_c,k_c)$-th subcarrier) increases, the required SNR must be lower in order to keep $v\sigma_{m_{n,n_c}^{(k,k_c)}} ^2 \sigma _{x_{n}^{(k)}} ^2$ invariable.

The effects of antenna numbers are shown in Fig. \ref{Fig:AntennaNumber}, in which $\beta$=0.95 and SNR=10dB, respectively. It can be observed that $\mathscr{R}_{n,n_c}^{(k,k_c)}$ decreases with the growth of antenna numbers. Especially, it reduces sharply at the first several antenna numbers. Thus, properly antenna numbers can efficiently reduce the capacity loss caused by CSI difference. Another important observation is that, as antenna number further increases, the relative capacity loss converges to 25.28\%. Hence, one cannot decrease the capacity loss limitlessly by increasing antenna number, because the capacity loss is lower-bounded by \eqref{equ:AntennaLimitCapacityloss}.

\begin{figure}
\centering
\includegraphics[width=0.45\textwidth]{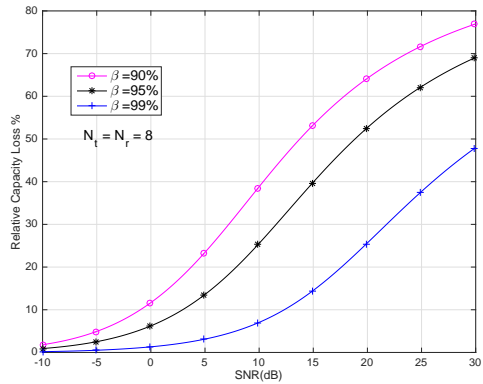}
\caption{Relative capacity loss versus correlation $\sigma_{M}$=0.01, 0.05, 0.1 when MIMO antenna number $N_t=N_r=8$.} \label{Fig:GroupingRelativeCapacityLoss}
\end{figure}

\begin{figure}
\centering
\includegraphics[width=0.45\textwidth]{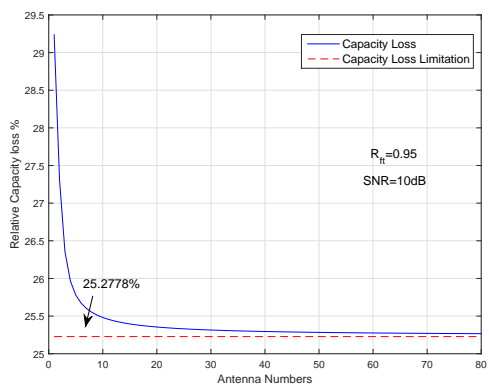}
\caption{Relative capacity loss versus MIMO antenna number $N_t=N_r={1,2,\cdots,80}$ and the limitation when $min\{N_t,N_r\} \rightarrow \infty$ when constraint $R_{tf}=0.95$ and SNR = 10dB.} \label{Fig:AntennaNumber}
\end{figure}

\subsection{Grouping Performance of Proposed Algorithm}

For evaluating the algorithm performance and summarizing the application conditions, we build different transmission scenarios with the help of COST207 and Jakes model, which are widely used in wireless channel simulations. The subcarrier grouping performance at different SNR, moving velocity and transmit environment are analyzed numerically in this part.

The effects on SNR of 0 dB, 10 dB and 15 dB are shown in Fig. \ref{Fig:SubcarrierGroupingSize_SNR}, where the transmitting environment is RA (i.e. rural area) and the moving velocity is 10 m/s. It can be observed that the subcarrier group sizes at the three SNRs grow exponentially with the capacity loss threshold, and in the 0 dB regime, the grouping size grows about 5 times faster than that in the 15 dB regime. It means that the system cost can be significantly reduced by sacrificing channel capacity and it's especially efficient in low SNR regime.

In addition, from \eqref{Equ:calculateGroupSize} one can observe that the grouping size is determined by the subcarrier grouping length in frequency domain as well as time domain (i.e. $S_f$ and $S_t$). As shown in Fig. \ref{Fig:SubcarrierGroupingSize_SNR}, the contributions from $S_f$ and $S_t$ are denoted as frequency stages and time stages, respectively. It can be noted that, in simulated transmitting regimes, time stages is relatively smaller than the frequency stages. Thus, the marked time stages in Fig. \ref{Fig:SubcarrierGroupingSize_SNR} are the time stages with one unitary $S_f$. The stages are caused by the varying of $S_t$.

The frequency stages are related to channel correlations, i.e. multipath scattering and small scale fading. As to the time stages, they are related to channel time variations, i.e. Doppler spread. Therefore, the frequency stages and time stages are separately determined by transmitting environments and antenna moving velocities, which are depicted in  Fig. \ref{Fig:SubcarrierGroupingSize_ENV}-\ref{Fig:SubcarrierGroupingSize_Speed}.

\begin{figure}
\centering
\includegraphics[width=0.45\textwidth]{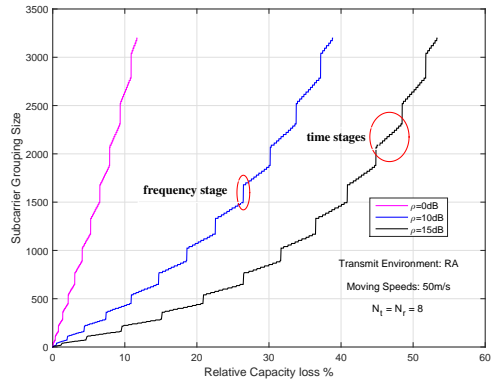}
\caption{Subcarrier grouping performance versus variant SNR (0 dB, 10 dB and 15 dB) when the transmit environment is rural area (RA), moving speed is 10 m/s and the MIMO antenna regime is $8 \times 8$.} \label{Fig:SubcarrierGroupingSize_SNR}
\end{figure}

Fig. \ref{Fig:SubcarrierGroupingSize_ENV} shows that in rich scattering environments (e.g. typical urban (TU)), the subcarrier group size suffers 50\% loss and 80\% more loss compared to low scattering environments (e.g. hilly terrain (HT) and RA). Observe the frequency stages in three simulated environments, one can observe that the frequency stages is much smaller in rich scattering transmitting scenarios. This is because as the channel scattering increases, the channel frequency correlations decreases. Thus, smaller frequency subcarrier grouping length is needed to achieve the same capacity loss, which leads to smaller frequency stages.

Fig. \ref{Fig:SubcarrierGroupingSize_Speed} shows that compared to 10 m/s and 30 m/s mobility scenarios, the subcarrier group size in 100 m/s suffers 70\% and 40\% more capacity loss, respectively. It can be observed that the length of frequency stages in this three simulated velocities are equal. This is because the transmit environment is fixed and thus, the channel frequency correlations in these three simulation regimes are identical. As to the time stages, that in low velocity scenarios is obviously larger than the one in high velocity scenarios. That is, in low velocity scenarios, the channel time variation is slow and larger time subcarrier grouping length can be achieved.

From the simulation results shown in Fig. \ref{Fig:SubcarrierGroupingSize_ENV}-\ref{Fig:SubcarrierGroupingSize_Speed}, we can simply justify that the SNR, environmental multipath scattering and Doppler frequency shift can greatly affect the subcarrier grouping performance. We can also get some insights on the suitable conditions of proposed subcarrier grouping scheme: the subcarrier group size can be extremely large in some specific scenarios (i.e. sparse scattering environment or low Doppler frequency spread scenarios). For instance, rural areas, low mobility, etc. In these suitable scenarios, the proposed subcarrier can achieve low system cost or low capacity loss.

\begin{figure}
\centering
\includegraphics[width=0.45\textwidth]{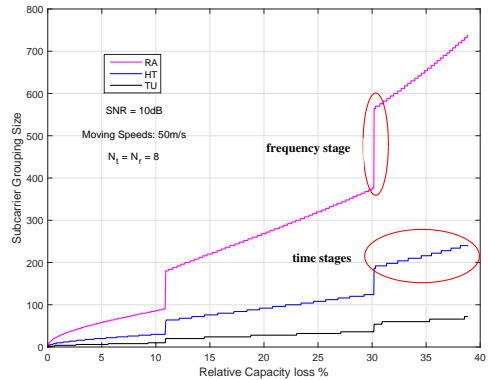}
\caption{Subcarrier grouping performance versus variant environments (RA, HT and TU) when the SNR is 10dB, moving speed is 10m/s and the MIMO antenna number is $N_t=N_r=8$.} \label{Fig:SubcarrierGroupingSize_ENV}
\end{figure}

\begin{figure}
\centering
\includegraphics[width=0.45\textwidth]{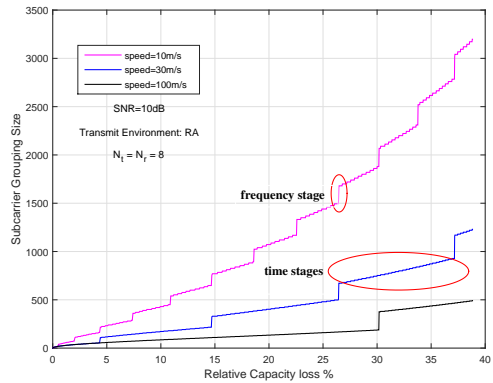}
\caption{Subcarrier grouping performance versus variant moving speeds (10m/s, 30m/s, 100m/s) when the SNR is 10dB, transmit environment is rural area (RA) and the MIMO antenna number is $N_t=N_r=8$.} \label{Fig:SubcarrierGroupingSize_Speed}
\end{figure}

Furthermore, we consider the other subcarriers in the same subcarrier group. As known to all, some continuous functions can be approximately denoted by the Taylor expansion of itself, if its derivatives exists. For our discussed subcarrier grouping problem, in order to avoid excess capacity loss, the correlation threshold $\beta$ can not be too large. Thus, either the channel correlation function $R_{tf}(t,f)$ or the relation functions \eqref{Equ:RelativeErgodicCapacityLoss} and \eqref{Equ:ErgodicCapacityLoss} can be approximated by using its first order derivative, meaning that we can generate the CSI difference in linear mode. That is, the capacity loss of the entire system is the average capacity loss of the best subcarrier and the worst subcarrier. Therefore, we can summarize that the system capacity loss is half of the worst subcarrier capacity loss, since the best subcarrier is at the center of the subcarrier group and it's corresponding capacity loss is zero.

\section{Conclusion}\label{Sec:Conclusion}

In this paper, we analyzed the relations between the OFDM subcarrier group size and the corresponding system capacity loss over i.i.d. double-selective correlated Rayleigh fading MIMO channels. Under the bridge of the second order statistic of CSI difference between the $(n,k)$- and $(n_c,k_c)$-th subcarriers  (i.e. $\mathbf{M}_{n,n_c}^{(k,k_c)}$), the quantitative mapping from subcarrier group size to channel capacity loss is derived. It can be noted that the second order statistic of CSI difference is related to channel self-correlation and is determined by transmit environments and antenna velocity, which can be sensed via specific signal processing methods and positioning systems, respectively.

In summary, theoretical results in this paper show that the subcarrier group size is related to SNR, MIMO structure and channel self-correlation, when the capacity loss is constraint. More specifically, the capacity loss decreases with respect to antenna numbers and channel correlation. By contrast, it increases with respect to SNR. One more important observation is that the relative capacity loss decreases sharply with the first few antennas, and converges to a lower-bound expeditiously. Thus, when the total transmit power is constant, few additional antennas can efficiently decrease the capacity loss, but much too more additional antennas are not helpful.

Based on the developed theoretical results, we also provided an subcarrier grouping algorithm, which adaptively adjust the subcarrier group size under the sensed transmit environment to satisfy the preset capacity loss threshold. If one adjusts the capacity loss threshold, the proposed subcarrier grouping algorithm can achieve tradeoff between system service capacity and system cost. Simulation results show that proposed tradeoff can be extremely effective in some specific transmit regimes (i.e. low SNR, properly high antenna numbers or sparse scattering scenarios). In these transmit regimes, sacrificing acceptable system capacity, one can achieve large subcarrier group size and reduce system resource cost on subcarrier precoding efficiently, which agrees with developed theoretical results and provide valuable insights for the design of MIMO-OFDM communication systems.

\section*{Acknowledgment}

This work was partly supported by the China Major State Basic Research Development Program (973 Program) no.2012CB316100(2), National Natural Science Foundation of China (NSFC) no. 61171064 and NSFC no. 61321061.

\begin{appendices}

\section{}\label{Appen:capacityThm}
A similar range of the mutual information on the channel with CSI difference has been given in \cite{medard2000effect}, i.e.
\begin{equation}\label{Equ:MDLowerBound}
\begin{split}
& \mathbf{I} (\mathbf{X}_{n}^{(k)};\mathbf{Y}_{n}^{(k)} | \mathbf{H}_{n}^{(k)}) \geq  \\
&\log_2 \det \left( \mathbf{I} + \mathbf{H}_{n}^{(k)} \sigma _{\mathbf{X}_{n}^{(k)}} ^2  {\mathbf{H}_{n}^{(k)}}^\dagger ( \sigma_{\mathbf{M}_{n,n_c}^{(k,k_c)} \mathbf{X}_{n}^{(k)} }^2 + \sigma_{\mathbf{W}_{n}^{(k)}} ^2  )^{-1}   \right)
\end{split}
\end{equation}

The lower-bound of the channel capacity on the $(n,k)$-th subcarrier can be acquired via following derivations.

As to the independence between the variables $\mathbf{X}_{n}^{(k)}$, $\mathbf{H}_{n}^{(k)}$, $\mathbf{M}_{n,n_c}^{(k,k_c)}$ and $\mathbf{W}_{n}^{(k)}$, we have $\sigma_{\mathbf{M}_{n,n_c}^{(k,k_c)} \mathbf{X}_{n}^{(k)} }^2$ = $\sigma_{\mathbf{M}_{n,n_c}^{(k,k_c)}}^2 \sigma _{\mathbf{X}_{n}^{(k)}} ^2$ and the square matrix $\sigma _{\mathbf{X}_{n}^{(k)}} ^2$, $\sigma_{\mathbf{M}_{n,n_c}^{(k,k_c)}}^2$ and $\sigma_{\mathbf{W}_{n}^{(k)}} ^2$ are diagonal matrices with the same diagonal element. Thus, the right hand of (\ref{Equ:MDLowerBound}) can be rewritten as
\begin{equation}
\begin{split}
&\log  \det \left( \mathbf{I} + \mathbf{H}_{n}^{(k)} \sigma _{\mathbf{X}_{n}^{(k)}} ^2  {\mathbf{H}_{n}^{(k)}}^\dagger ( \sigma_{\mathbf{M}_{n,n_c}^{(k,k_c)} \mathbf{X}_{n}^{(k)} }^2 + \sigma_{\mathbf{W}_{n}^{(k)}} ^2  )^{-1}   \right) \\
= &\log  \det \left( \mathbf{I} + \sigma _{\mathbf{X}_{n}^{(k)}} ^2 \mathbf{S}_{n}^{(k)}  ( \sigma_{\mathbf{M}_{n,n_c}^{(k,k_c)} \mathbf{X}_{n}^{(k)} }^2 + \sigma_{\mathbf{W}_{n}^{(k)}} ^2  )^{-1}   \right) \\
= &{\log}_2~\det(\mathbf{I} + \frac{\sigma _{x_{n}^{(k)}} ^2}{v\sigma_{m_{n,n_c}^{(k,k_c)}} ^2 \sigma _{x_{n}^{(k)}} ^2 + \sigma_{w_{n}^{(k)}}^2} \mathbf{S}_{n}^{(k)})\\
\end{split}
\end{equation}
With  $\gamma_e$ in (\ref{Equ:SNROfCapacityBounds}), we can get the lower-bound of the channel capacity on the $(n,k)$-th subcarrier.

\section{}\label{Appen:ergodicCapacityThm}
The ergodic capacity of i.i.d. Rayleigh fading MIMO channels is given by \cite{zhang2005very} as

\begin{equation}\label{Equ:ErgodicCapacity}
\begin{split}
&\log_2\left( 1 + \sum_{k=1}^n \alpha_k(\gamma) e^{\begin{matrix} \sum_{i=0}^{k-1} \psi (m-i) \end{matrix}} \right)  \leq E(C) \\
&{\qquad \qquad \qquad \qquad} \leq \log_2\left( 1 + \sum_{k=1}^n \alpha_k(\gamma) \prod_{i=1}^{k-1} (m-i) \right),
\end{split}
\end{equation}
wherein, $\alpha _k(\gamma)$ and $\psi(x)$ have the same definition as \emph{lemma} \ref{Lem:ErgodicCapacityLoss} and $E(\cdot)$ is the expectation function.

 By using the ESNR  in \eqref{Equ:SNROfCapacityBounds} to \eqref{Equ:ErgodicCapacity}, one can get the ergodic capacity at the $(n,k)$-th subcarrier is lower-bounded by

\begin{equation*}
E(C_n^{(k)}) \geq \log_2\left( 1 + \sum_{k=1}^n \alpha_k(\gamma_e) e^{\begin{matrix} \sum_{i=0}^{k-1} \psi (m-i) \end{matrix}} \right).
\end{equation*}

Similarly, the ergodic capacity at the $(n_c,k_c)$-th subcarrier is upper-bounded by

\begin{equation*}
E(C_{n_c}^{(k_c)}) \leq \log_2\left( 1 + \sum_{k=1}^n \alpha_k(\gamma) \prod_{i=1}^{k-1} (m-i) \right).
\end{equation*}

Thus, the results in \emph{Lemma} \ref{Lem:ErgodicCapacityLoss} can be easily derived.

\end{appendices}

\bibliography{bibfile}

\begin{thebibliography}{10}
\providecommand{\url}[1]{#1}
\csname url@samestyle\endcsname
\providecommand{\newblock}{\relax}
\providecommand{\bibinfo}[2]{#2}
\providecommand{\BIBentrySTDinterwordspacing}{\spaceskip=0pt\relax}
\providecommand{\BIBentryALTinterwordstretchfactor}{4}
\providecommand{\BIBentryALTinterwordspacing}{\spaceskip=\fontdimen2\font plus
\BIBentryALTinterwordstretchfactor\fontdimen3\font minus
  \fontdimen4\font\relax}
\providecommand{\BIBforeignlanguage}[2]{{%
\expandafter\ifx\csname l@#1\endcsname\relax
\typeout{** WARNING: IEEEtran.bst: No hyphenation pattern has been}%
\typeout{** loaded for the language `#1'. Using the pattern for}%
\typeout{** the default language instead.}%
\else
\language=\csname l@#1\endcsname
\fi
#2}}
\providecommand{\BIBdecl}{\relax}
\BIBdecl

\bibitem{stuber2004broadband}
G.~L. Stuber, J.~R. Barry, S.~W. Mclaughlin, Y.~Li, M.~A. Ingram, and T.~G.
  Pratt, ``Broadband mimo-ofdm wireless communications,'' \emph{Proceedings of
  the IEEE}, vol.~92, no.~2, pp. 271--294, 2004.

\bibitem{larsson2014massive}
E.~Larsson, O.~Edfors, F.~Tufvesson, and T.~Marzetta, ``Massive mimo for next
  generation wireless systems,'' \emph{Communications Magazine, IEEE}, vol.~52,
  no.~2, pp. 186--195, 2014.

\bibitem{sellathurai2010mimo}
M.~Sellathurai and S.~Haykin, ``Mimo channel capacity,'' \emph{Space-Time
  Layered Information Processing for Wireless Communications}, pp. 8--32, 2010.

\bibitem{paulraj2003introduction}
A.~Paulraj, R.~Nabar, and D.~Gore, \emph{Introduction to space-time wireless
  communications}.\hskip 1em plus 0.5em minus 0.4em\relax Cambridge university
  press, 2003.

\bibitem{you2015adjustable}
L.~You, X.~Gao, A.~L. Swindlehurst, and W.~Zhong, ``Adjustable phase shift
  pilots for sparse massive mimo-ofdm channels,'' in \emph{2015 IEEE 16th
  International Workshop on Signal Processing Advances in Wireless
  Communications (SPAWC)}.\hskip 1em plus 0.5em minus 0.4em\relax IEEE, 2015,
  pp. 206--210.

\bibitem{tse2005fundamentals}
D.~Tse and P.~Viswanath, \emph{Fundamentals of wireless communication}.\hskip
  1em plus 0.5em minus 0.4em\relax Cambridge university press, 2005.

\bibitem{joung2015multicast}
J.~Joung, H.~D. Nguyen, P.~H. Tan, and S.~Sun, ``Multicast linear precoding for
  mimo-ofdm systems,'' \emph{Communications Letters, IEEE}, vol.~19, no.~6, pp.
  993--996, 2015.

\bibitem{vu2007mimo}
M.~Vu and A.~Paulraj, ``Mimo wireless linear precoding,'' \emph{Signal
  Processing Magazine, IEEE}, vol.~24, no.~5, pp. 86--105, 2007.

\bibitem{jiang2005joint}
Y.~Jiang, J.~Li, and W.~W. Hager, ``Joint transceiver design for mimo
  communications using geometric mean decomposition,'' \emph{IEEE Transactions
  on Signal Processing}, vol.~53, no.~10, pp. 3791--3803, 2005.

\bibitem{jiang2005uniform}
------, ``Uniform channel decomposition for mimo communications,'' \emph{IEEE
  Transactions on Signal Processing}, vol.~53, no.~11, pp. 4283--4294, 2005.

\bibitem{edfors2014massive}
O.~Edfors and F.~Tufvesson, ``Massive mimo for next generation wireless
  systems,'' \emph{IEEE Communications Magazine}, p. 187, 2014.

\bibitem{pande2007reduced}
T.~Pande, D.~J. Love, and J.~V. Krogmeier, ``Reduced feedback mimo-ofdm
  precoding and antenna selection,'' \emph{IEEE Transactions on Signal
  Processing}, vol.~55, no.~5, pp. 2284--2293, 2007.

\bibitem{choi2006interpolation}
J.~Choi, B.~Mondal, and R.~W. Heath, ``Interpolation based unitary precoding
  for spatial multiplexing mimo-ofdm with limited feedback,'' \emph{IEEE
  Transactions on Signal Processing}, vol.~54, no.~12, pp. 4730--4740, 2006.

\bibitem{zhang2007reduced}
H.~Zhang, Y.~Li, V.~Stolpman, and N.~Van~Waes, ``A reduced csi feedback
  approach for precoded mimo-ofdm systems,'' \emph{IEEE Transactions on
  Wireless Communications}, vol.~6, no.~1, pp. 55--58, 2007.

\bibitem{lin2015feedback}
Y.-P. Lin, T.-H. Chou, and S.-M. Phoong, ``Feedback for time-correlated
  mimo-ofdm system using predictive quantization of bit loading and subcarrier
  clustering,'' in \emph{2015 IEEE International Conference on Digital Signal
  Processing (DSP)}.\hskip 1em plus 0.5em minus 0.4em\relax IEEE, 2015, pp.
  384--387.

\bibitem{ge2014energy}
X.~Ge, X.~Huang, Y.~Wang, M.~Chen, Q.~Li, T.~Han, and C.-X. Wang,
  ``Energy-efficiency optimization for mimo-ofdm mobile multimedia
  communication systems with qos constraints,'' \emph{IEEE Transactions on
  Vehicular Technology}, vol.~63, no.~5, pp. 2127--2138, 2014.

\bibitem{lu2014precoding}
J.~Lu and P.~Fan, ``Precoding with adaptive subcarrier grouping in mimo-ofdm
  systems,'' in \emph{2014 International Workshop on High Mobility Wireless
  Communications (HMWC)}.\hskip 1em plus 0.5em minus 0.4em\relax IEEE, 2014,
  pp. 103--108.

\bibitem{molisch2007wireless}
A.~F. Molisch, \emph{Wireless communications}.\hskip 1em plus 0.5em minus
  0.4em\relax John Wiley \& Sons, 2007.

\bibitem{therrien1992discrete}
C.~W. Therrien, \emph{Discrete random signals and statistical signal
  processing}.\hskip 1em plus 0.5em minus 0.4em\relax Prentice Hall PTR, 1992.

\bibitem{dahlman20073g}
E.~Dahlman, S.~Parkvall, J.~Skld, and P.~Beming, ``3g evolution, hspa and lte
  for mobile broadband, academic press,'' \emph{Jordan Hill, Oxford, UK}, 2007.

\bibitem{medard2000effect}
M.~Medard, ``The effect upon channel capacity in wireless communications of
  perfect and imperfect knowledge of the channel,'' \emph{IEEE Transactions on
  Information Theory}, vol.~46, no.~3, pp. 933--946, 2000.

\bibitem{zhang2005very}
Q.~Zhang, X.~Cui, and X.~Li, ``Very tight capacity bounds for mimo-correlated
  rayleigh-fading channels,'' \emph{IEEE Transactions on Wireless
  Communications}, vol.~4, no.~2, pp. 681--688, 2005.

\end{thebibliography}



\end{document}